\documentclass{llncs}

\pagestyle{plain} % page numbers fro the referees

\usepackage[utf8]{inputenc}

\usepackage[utf8]{inputenc}

\usepackage[hidelinks, breaklinks=true]{hyperref}

\usepackage{bbding}

\usepackage{etex}
\usepackage[table]{xcolor} 
\usepackage[utf8]{inputenc}

\usepackage{etoolbox} % needed for the \forcsvlist command

\usepackage{float}
\usepackage{xspace}
\usepackage{calc}
\usepackage{caption}
\usepackage{subcaption}
\usepackage{enumitem}
\usepackage{booktabs}
\usepackage{tabu}
\usepackage{multirow}
\usepackage{url}
\usepackage{graphicx} 
\usepackage{longtable} 
\usepackage{dcolumn}

\usepackage{wrapfig}

\usepackage{amsmath, amssymb}
\usepackage{extarrows}
\usepackage{mathpartir}
\usepackage{stmaryrd}

\usepackage{listings}
%% a somewhat friendly scheme for 5 different colors 
\definecolor{a1}		{RGB}{215,25,28} % a kind of red
\definecolor{a2}		{RGB}{150,150,20} % a kind of orange
\definecolor{a3}		{RGB}{44,123,182} % a kind of dark blue 
\definecolor{a4}		{RGB}{20,255,191} % a kind of green
\definecolor{a5}		{RGB}{171,217,233} % a kind of light blue 

\definecolor{dataflow}		{RGB}{20,20,233} % a data flow blue

%generic colors 
\definecolor{orange}   		{RGB}{255,128,0}
\definecolor{darkred}  		{RGB}{128,0,0}
\definecolor{darkgreen}		{RGB}{0,128,0}
\definecolor{dirtgreen}		{RGB}{180,210,180}
\definecolor{mixgreen}		{RGB}{34,139,34}
\definecolor{darkblue} 		{RGB}{0,0,128}
\definecolor{darkpurple}	{RGB}{160,32,240}
\definecolor{lightpurple}	{RGB}{180,180,210}
\definecolor{bl1}   		{RGB}{204,204,255}
\definecolor{bl2}   		{RGB}{128,128,255}
\definecolor{bl3}   		{RGB}{140,160,255}

%generic grays
\definecolor{gr1}		{RGB}{250, 250, 250}
\definecolor{gr2}		{RGB}{229, 229, 229}
\definecolor{gr3}		{RGB}{212, 212, 212}
\definecolor{gr4}		{RGB}{204, 204, 204}

%% a somewhat friendly scheme for 5 different colors 
\definecolor{g1}		{RGB}{215,25,28} % a kind of red
\definecolor{g2}		{RGB}{253,174,97} % a kind of orange
\definecolor{g3}		{RGB}{255,255,191} % a kind of yellow
\definecolor{g4}		{RGB}{171,217,233} % a kind of light blue 
\definecolor{g5}		{RGB}{44,123,182} % a kind of dark blue 

%% possibly the closest color scheme for 9 different colors -- use it with extreme caution 
\definecolor{s1}		{RGB}{228,26,28}
\definecolor{s2}		{RGB}{55,126,184}
\definecolor{s3}		{RGB}{77,175,74}
\definecolor{s4}		{RGB}{152,78,163}
\definecolor{s5}		{RGB}{255,127,0}
\definecolor{s6}		{RGB}{255,255,51}
\definecolor{s7}		{RGB}{166,86,40}
\definecolor{s8}		{RGB}{247,129,191}
\definecolor{s9}		{RGB}{153,153,153}

% colors for syntax highlighting in listings 
\colorlet{codegreen}		{mixgreen}
\colorlet{codepurple}		{darkpurple}
\colorlet{codehighlight}	{gr2}

% color for statements throughout the thesis 
\colorlet{stmtcolor}		{gr2} %default statement color
\colorlet{prg}				{gr2} %bp: from program
\colorlet{ltl}				{g4} %bp: from ltl
\colorlet{notIC}			{bl3} %tip: not in unsat core
\colorlet{stateass}			{g2}

%color for pictures
\colorlet{outlineblue}		{g5}
\colorlet{fillblue}			{g4}
\colorlet{darkback}			{gr2}
\colorlet{lightback}		{gr1}

% colors for different ultimate plugins
\colorlet{source}			{gr1}
\colorlet{controller}		{gr1}
\colorlet{thirdparty}		{gr1} 
\colorlet{core}				{gr1}
\colorlet{libs}				{gr1}
\colorlet{source}			{gr1}
\colorlet{analyzer}			{gr1}
\colorlet{generator}		{gr1}
\colorlet{output}			{gr1}

% changes standard tt font (both in listings and \texttt)
\usepackage{courier}

\usepackage[colorinlistoftodos]{todonotes}
\usepackage{hyperref} 

\usepackage{tikz}
%\usetikzlibrary{backgrounds,automata,arrows,patterns,calc,positioning,decorations.pathreplacing}
\usetikzlibrary{fit, calc, tikzmark, positioning}

% for data flow example section
\usetikzlibrary{decorations.pathreplacing}

%\usepackage[%
%all=normal,
%floats,
%wordspacing,
%mathspacing,
%tracking,
%paragraphs
%%charwidths
%]{savetrees}

\usepackage{pgfplots} % for plot in heap sep chapter

\hypersetup{colorlinks  
  ,pdfpagemode=UseNone
  ,linkcolor=black  
  ,filecolor=black 
  ,urlcolor=black
  ,citecolor=black
%  ,pdftitle={}
%  ,pdfauthor={} 
} 

\lstdefinelanguage{Boogie}{
  morekeywords={assert, assume, while, if, then, else, procedure, 
    implementation, return, var, call, ensures, requires, havoc, modifies, type},
  sensitive=false,
  morecomment=[l]{//},
  %morecomment=[s]{/*}{*/},
  morestring=[b]",
  %escapeinside={(*@}{@*)},
  escapeinside={/*}{*/},
}
\lstset{language=Boogie,captionpos=b,basicstyle=\footnotesize\ttfamily}
%\lstset{language=Boogie,captionpos=b, numbers=left}

%\newcommand{\tikzmark}[2]{%
%  \tikz[overlay, remember picture] \node [] (#1) {#2};}

\hyphenation{
  non-con-tra-dic-ting 
  mo-no-to-nous
}

\newcommand{\define}{\ensuremath{\stackrel{\mathit{def}}{=}\ }}
%\newcommand{\formulas}{\ensuremath{\mathit{Formulas}}}

% lifted from chap_eqdom
\newcommand{\locations}{\ensuremath{\mathsf{Loc}}\xspace}
\newcommand{\constarray}[1]{\ensuremath{(\mathsf{const}\ #1)}}
\newcommand{\subst}[2]{\ensuremath{[#1\!\mapsto\!#2]}}

\newcommand{\true}{\ensuremath{\texttt{true}}}
\newcommand{\false}{\ensuremath{\texttt{false}}}

% lifted from chap_dataflow
 
\newcommand{\loc}{\ensuremath{\mathit{\ell}}\xspace} % data flow uses locations for Loc now

\newcommand{\initloc}{\ensuremath{\mathit{\loc_{0}}}\xspace}

% only used in heapsep right now, but seems general

%\newcommand{\preimage}{\ensuremath{\mathsf{PreImg}}}
\newcommand{\preimage}[2]{\ensuremath{#1^{-1}[#2]}}

\renewcommand{\sharp}{\ensuremath{^\#}}

% will be used in eqdom and heapsep 
% (using \ell for literals clashes with locations in dataflow/heapsep)
\newcommand{\lit}{\ensuremath{\mathsf{lit}}}

\newcommand{\post}[2]{\mathsf{post}(#1, #2)}
\newcommand{\postOp}{\mathsf{post}}

% will be used in dataflow and heapsep (probably)
\newcommand{\statements}{\ensuremath{\Sigma}}
\newcommand{\allstatements}{\ensuremath{\mathsf{Statements}}}
\newcommand{\bnf}{\,::=\,}
\newcommand{\states}{\ensuremath{\mathsf{States}}}

\newcommand{\allcommands}{\ensuremath{\mathsf{Commands}}}

\newcommand{\allexpressions}{\ensuremath{\mathsf{Expressions}}\xspace}
\newcommand{\allexecutions}{\ensuremath{\mathsf{Executions}}}
\newcommand{\allstates}{\ensuremath{\states}}

\newcommand{\allvariables}{\ensuremath{\mathsf{Variables}}}

% sorts
\newcommand{\unintsort}{\ensuremath{\mathit{Sort}}\xspace}
 % assigns sorts to variables
\newcommand{\mapsort}{\ensuremath{\unintsort\! \to \unintsort}\xspace}

% pretty printing executions
%\newcommand{\addspace}[1]{\,\ldotp\,#1}
\newcommand{\addspace}[1]{\ldotp#1}
\newcommand{\execution}[2]{#1 \forcsvlist\addspace{#2} }

%expressions
\newcommand{\suffixbase}{\ensuremath{\mathsf{base}}}
\newcommand{\suffixmap}{\ensuremath{\mathsf{map}}}
\newcommand{\suffixbool}{\ensuremath{\mathsf{bool}}}

\newcommand{\baseexp}{\ensuremath{\mathit{e}_\suffixbase}}

\newcommand{\mapexp}{\ensuremath{\mathit{e}_\suffixmap}}

\newcommand{\boolexp}{\ensuremath{\mathit{e}_\suffixbool}}

% LTL symbols

\newcommand{\ltldiamond}{\ensuremath{\lozenge\,}}
%\renewcommand{\diamond}{\ensuremath{\diamond\,}}

% variables 
\newcommand{\variables}{\ensuremath{\mathit{Var}}\xspace}

\newcommand{\mapvars}{\ensuremath{\variables_\suffixmap}\xspace}

% boolean connectives
\newcommand{\progand}{\ensuremath{\,\texttt{\&\&}\,}}
\newcommand{\progor}{\ensuremath{\,\texttt{||}\,}}
\newcommand{\prognot}{\ensuremath{\texttt{!}}}

\newcommand{\reach}{\ensuremath{\textit{Reach}}\xspace}

\newcommand{\ultimate}{Ultimate\xspace}

\newcommand{\writeblocks}{\ensuremath{\mathcal{W}}\xspace}
\newcommand{\writeblock}{\ensuremath{W}\xspace}
\newcommand{\somewrite}{\ensuremath{\sigma_{\sf wr}}\xspace}
\newcommand{\someread}{\ensuremath{\sigma_{\sf rd}}\xspace}

\newcommand{\stmtransformer}{\ensuremath{\tau}\xspace}
\newcommand{\rdtransformer}{\ensuremath{\tau_\lastwrites}\xspace}

\newcommand{\lastwrite}{\ensuremath{\mathsf{lw}}\xspace}
\newcommand{\lastwrites}{\ensuremath{\mathsf{LstWr}}\xspace}
\newcommand{\lastwritessharp}{\ensuremath{\lastwrites^{\sharp}\!}\xspace}
\newcommand{\nonwrite}{\ensuremath{\bot}\xspace}

\newcommand{\stmsrcloc}{\ensuremath{\mathit{src}}}

\newcommand{\bisim}{\ensuremath{\sim}\xspace}

\newcommand{\progpre}{\ensuremath{P_\lastwrites}\xspace}

\newsavebox\heapsepExampleLBox
\newsavebox\heapsepExampleRBox

\newcommand{\stm}[1]{\ensuremath{\texttt{#1}}\xspace}
\newcommand{\stmem}[1]{\ensuremath{\emph{\texttt{#1}}}\xspace}
\newcommand{\stmselect}[2]{\stm{#1[#2]}}
\newcommand{\stmstore}[3]{\stm{#1[#2:=#3]}}
\newcommand{\stmconstarray}[1]{\ensuremath{\stm{(const $#1$)}}}

\newcommand{\stmedge}[3]{\ensuremath{(#1, \, #2 \, , #3)}}

\newcommand{\writestatements}{\ensuremath{\Sigma_\mathsf{wr}}\xspace}
\newcommand{\readstatements}{\ensuremath{\Sigma_\mathsf{rd}}\xspace}

\newcommand{\eval}[2]{\ensuremath{#1[\![#2]\!]}}

\newcommand{\automizer}{{\sf Automizer}\xspace}
%\newcommand{\automizer}{{\sf Automizer($P$)}\xspace}

%\newcommand{\heapsepautomizer}{$\tau(P)$+\ttt{Automizer($P'$)}\xspace}

%\newcommand{\heapsepautomizeronly}{\ttt{Automizer($P'$)}\xspace}

%%%%%%%%%% from Matthias:
% environment for pseudocode

%   the following are used to frame statements in order to indicate that the
% whole statement is one letter  statement in fontsize normalsize

%  statement in fontsize small

%  statement in fontsize footnotesize

%  colored statement in fontsize normalsize

%  colored statement in fontsize small

%\title{Separation of Map Variables}
\title{Different Maps for Different Uses}
\subtitle{A Program Transformation for \\ Intermediate Verification Languages}

\author{
  Daniel Dietsch \and
  Matthias Heizmann \and
  Jochen Hoenicke \and \\
  Alexander Nutz \and
  Andreas Podelski
 }
\institute{University of Freiburg}

\begin{document}

\maketitle

%\tableofcontents
%\newpage

 \begin{abstract}
  In theorem prover or SMT solver based verification, the program to be verified
is often given in an intermediate verification language such as Boogie, Why, or
CHC.  This setting raises new challenges.  We investigate a preprocessing step
which takes the similar role that alias analysis plays in verification, except
that now, a (mathematical) map is used to model the memory or a data object of
type {\tt array}.  We present a program transformation that takes a program $P$
to an equivalent program $P'$ such that, by verifying $P'$ instead of $P$, we
can reduce the burden of the exponential explosion in the number of case splits.
Here, the case splits are according to whether two statements  using the same
map variable are independent or not; if they are independent, we might as well
employ two different map variables and thus remove the need for a case split
(this is the idea behind the program transformation).  We have implemented the
program transformation and show that,  in an ideal case, we can
avoid the exponential explosion.

 \end{abstract}

 \section{Introduction}
  In theorem prover or SMT solver based verification, the
program to be verified is often given in an \emph{intermediate verification
language} (such as Boogie~\cite{this-is-boogie-2-2}, Why~\cite{Filliatre03}, or CHC~\cite{DBLP:conf/pldi/GrebenshchikovLPR12}). This setting
is useful in many aspects but it raises its proper challenges; see,
e.g.,~\cite{DBLP:conf/paste/BarnettL05} for the investigation of axiomatic
semantics. Here,
we investigate a novel problem that arises in this setting
where a (mathematical) map is used to model the memory or a data
object of type \texttt{array}.  The problem is to transform a program
into an equivalent program such that statements with
independent uses of a given map variable become statements with
different map variables.  In a way, we lift the \emph{alias problem}
from programming languages to intermediate verification languages.  We
will next explain the problem and the new challenge that it raises.
The explanation is subtle and will need a large chunk of the introduction.

% and [4]. Aber [4] passt nicht!

% [2] M. Barnett, B.-Y. E. Chang, R. DeLine, B. Jacobs, and K. R. M. Leino. Boogie: A modular reusable verifier for object-oriented programs. In Formal Methods for Components and Object (FMCO), volume 4111 of Lecture Notes in Computer Science, pages 364–387. Springer-Verlag, 2006.
% [8] J.-C. Filliˆatre. Why: a multi-language multi-prover verification tool. Research Report 1366, LRI, Universit ́e Paris Sud, 2003.
% [12] Sergey Grebenshchikov, Nuno P. Lopes, Corneliu Popeea, and Andrey Ry- balchenko. Synthesizing software verifiers from proof rules. In PLDI, pages 405–416, 2012.
% [3] M. Barnett and K. R. M. Leino. Weakest-precondition of unstructured programs. In Program Analysis For Software Tools and Engineering (PASTE), pages 82–87. ACM Press, 2005.
% [4] Formal Translation of Bytecode into BoogiePL. Hermann Lehner and Peter M\"{u}ller. Electronic Notes in Theoretical Computer Science 190 (2007) 35–50

The idea behind an intermediate verification language is the one
of a \emph{lingua franca} for verification. Once a C or Java program
has been translated to intermediate code, we are no longer bothered
with the intricacies and ambiguities of definitions of programming
language semantics. There are no {hidden assumptions}
(such as, e.g.,  the absence of undefined behavior); the program is
taken \emph{as is}, i.e., all assumptions appear in the program text
(e.g., in \texttt{ensure} statements). This is one reason why it has been advocated to
present benchmarks in an intermediate programming language for
software verification competitions; see, e.g.,~\cite{chccomp}.  Note that there
are several scenarios where the program in the intermediate
verification language comes without a corresponding program in a
programming language.  For example, it may have been constructed by a
specific module of the
verification method; see, e.g., the construction of \emph{path
  programs} in~\cite{DBLP:conf/pldi/BeyerHMR07} and~\cite{DBLP:conf/sas/GreitschusDP17}.
% [1] CHC-COMP, https://chc-comp.github.io/
% [11] Dirk Beyer, Thomas A. Henzinger, Rupak Majumdar, Andrey Rybalchenko: Path invariants. PLDI 2007: 300-309
% [12] Marius Greitschus, Daniel Dietsch, Andreas Podelski: Loop Invariants from Counterexamples. SAS 2017: 128-147

% The intermediate programming language is close to first-oder logic
% (and the theorem prover, SMT solver, constraint solver, \ldots\ which
% is used for discharging verification conditions) in the sense that it
% comes directly with the function to generate verification conditions.

The data manipulated by a program in an intermediate verification
language are mathematical objects (in the same domains and logical theories that
underly the theorem prover or SMT solver used for the verification). In particular,
an object of type
\texttt{array} in the intermediate verification language
is, in fact, a {map} in the mathematical sense (i.e., it is
manipulated like a mathematical map). 

The importance of
maps in intermediate verification languages is inherited directly from the
importance of arrays in programming languages. The importance is
amplified by the fact that in verification it is often convenient to
view the memory (or, the \emph{heap})  as a special case of an
array.

What is also inherited is, unfortunately, a
notorious practical issue in program verification:  the
need of case splits according to whether two statements with a
write resp.\ read access to a given array (or, to the memory) refer to the
same position, or not.   A well-known consequence of such case splits
is that they can lead to the exponential explosion of the size of the verification
condition.  If before we had
the exponential explosion   in
the number of statements in the program that access a given array (or
the memory), we
have now have the exponential explosion in the number of statements
that use a given map.  We thus need to address an analogous issue in the context of
intermediate
verification languages.

The standard solution to address the notorious practical issue is a
preprocessing step with an \emph{alias analysis}. Roughly speaking,
the alias analysis can help to infer which case splits are redundant.
In some cases, the alias analysis can thus alleviate the burden of the
exponential explosion in the number of case splits.

Unfortunately, an alias analysis for programming languages cannot
readily be transferred to a solution for intermediate verification languages.
The new challenge stems from the fact that we assume that a program in the intermediate verification language will encode
every assumption in the program text; i.e.,  we are not allowed
to use any assumption that does not appear in the program text.  

We give an example to illustrate this point.  
The example program is depicted in Figure~\ref{chap_heapsep_fig_example_ma} in 
Section~\ref{chap_heapsep_sec_discussion}.
We here use the
map-valued variable \texttt{mem} to model the memory and the procedure
\texttt{malloc} to model allocation (which we can specify together
with \texttt{ensure} statements that encode our assumptions about
allocation).  A statement that uses the map \texttt{mem} at position
\texttt{p} intuitively models the access of memory (by a write or by a read) via the pointer variable \texttt{p}.
We take a program that contains two statements which use \texttt{mem} at position
\texttt{p}  and position \texttt{q}, respectively.  We would like to infer that the
uses of \texttt{mem}  in the two statements are independent.  The term \emph{independent} here
means that the value of \texttt{p} in the execution of the one
statement is different from the value of \texttt{q} in the execution of the other
statement (in every execution of the program).
In our setting, we are
not allowed to use any hidden assumption (such as the absence of
undefined behavior). For example, we
are not allowed to assume that there is no execution in which the two statements are  executed when the value of \texttt{p} and
\texttt{q} is \texttt{null}. Thus, we are not
allowed to conclude that the two uses of \texttt{mem} in the two
statements are independent 
even if we can infer that \texttt{p} has not been assigned to \texttt{q},
and vice versa.  This would not be \emph{sound}.

Note that in the context of programming languages, where it is common
to use the assumption of the absence of undefined behavior, it
would be considered sound to conclude ``\texttt{p} and \texttt{q} do
not alias'' if the analysis can infer from the property that there is no execution in the
program that assigns \texttt{p} to \texttt{q}, and vice versa.   In
this sense, the
hidden assumption is the basis for the existence of 
very efficient (and effective) alias analyses.  A static analysis can infer the property by
checking a strong sufficient condition for the property (e.g., that the corresponding
statements simply do not occur in the program).

\subsubsection*{Contributions}

The overall contribution of this paper is to investigate the
theoretical foundations and a preliminary solution for a novel
research question which may be relevant for the practical potential of
intermediate verification languages. 

The question concerns a preprocessing step for intermediate
verification languages which takes the similar role that alias
analysis plays in the verification for programming languages.  Since
it is convenient to implement an optimization as a program
transformation (in particular for intermediate code), we consider a program
transformation that takes a program $P$ to an equivalent program $P'$ such that, by verifying $P'$ instead of $P$, we
can reduce the burden of the exponential explosion in the number of
case splits.  Here, the case splits are according to whether two statements in $P$
using the same map variable are independent or not; if they are
independent, we might as well employ different map variables  and thus
remove the need for a case split (this is
the idea behind the program transformation).   The question is:
Does there exist such a program transformation, and can it be made scalable?

In this paper, we present such a program transformation, together with
its implementation which we use to show that, in the best
case, we can avoid the exponential explosion altogether.

The program transformation is based on a static analysis that conservatively
infers which statements using a giving map variable are independent.
The overall goal of the
 analysis is to
infer a \emph{grouping} of statements such that we can introduce a
different map variable for each group of statements (the statements
within each group use the same 
map variable).

% Informally, we introduce an incarnation of the alias
% problem for intermediate verification languages. We can phrase the 
% problem as follows: given a map-valued program variable $m$, is the
% use of the value of $m$ in one given statement of the program
% \emph{independent} from the use of the value of $m$ in some other given
% statement of the program? If this is the case then we do not change
% the meaning of the program if we introduce different names for the
% map-valued variable in the two statements. The  goal for the
% corresponding analysis is to
% infer a \emph{grouping} of statements such that we can introduce a
% different name for each group of statements (with the same name for
% map-valued variable in all statements of a group).

\goodbreak

%\bigskip\noindent
Our technical contributions are as follows. 
\begin{itemize}
\item We formally introduce the independence property which enables the desired
  program transformation (in the context for of the intermediate
  programming language). 
% {\tiny

% For convenience, we refer to a statement such as ... := mem[p] as a read statement (which reads the value of the map mem at the position p), and we refer to a statement such as mem[p] := ... as a write statement (which updates the value of the map mem at the position p).
% \\
% We can now describe what we mean by indepence: in a given program, the read statement ... := mem[p] is independent from the  write statement mem[p] := ... if in every execution, the value read is not the value that is written.
% }
\item We present a static analysis that conservatively
infers which statements using a giving map variable are independent.
We define an instrumentation of a program
  with auxiliary variables such that an existing static analysis
  can infer the independence property. 
\item We define a program
  transformation that takes as input a program and the inferred
  independence property and returns a new program. In the new program
   statements use different map variables according to the
  inferred indepence property. 
\item We prove that the program
  transformation is sound, i.e., the new program is bisimulation
  equivalent to the input program.
\item We have implemented the program transformation into a toolchain
  for automatic verification.  A preliminary
  experimentation shows that the program transformation
  can be effective, at least in principle.   On a benchmark suite which is specifically tailored
  to condensate the case split explosion problem, the toolchain with the
  program transformation scales very well in the size of the program
  (whereas the toolchain without the program transformation
  quickly falls into the case split explosion problem and runs out of time
  or space).
\end{itemize}

 \section{Example}
  \label{chap_heapsep_sec_example}
  The left hand side of Figure \ref{fig_heapsep_example_original_prog} shows an
example program given in the Boogie~\cite{this-is-boogie-2-2}
verification language.
While the program models a program in the C programming language we want to
stress that our technique cannot rely on any metainformation specific to C, like
the meaning of the \stm{malloc} procedure, or the absence map reads on
uninitialized cells.
Map semantics in Boogie follow McCarthy's theory of 
arrays~\cite{DBLP:conf/ifip/McCarthy62}, which is also used in SMT solvers.
%, but
%rather towards programs with maps given in a logical or close-to-logical
%formalism where map semantics are given by the theory of
%arrays~\cite{DBLP:conf/ifip/McCarthy62}.

%While the program models a program in the C programming language we want to
%stress that our technique is not tailored towards C programs in particular, but
%rather towards programs with maps given in a logical or close-to-logical
%formalism where map semantics are given by the theory of
%arrays~\cite{DBLP:conf/ifip/McCarthy62}.

% idea of example program:
% - minimal
% - potential large number of non-aliasing checks for a solver based on unrolling (CEGAR, BMC)
% - purposefully artficial, we consider it common knowlegde that aliasing checks are a bottleneck in verification [ref papers]
%

The example program is artificial.  Its purpose is to necessitate a large number
of non-interference checks in a program of minimal size. So the main obstacle to
verifcation is the necessity of proving non-interference between the map
updates.

\begin{figure}[t]
  \begin{minipage}{0.47\textwidth}
  \begin{lrbox}{\heapsepExampleLBox}
    \begin{lstlisting}
var mem : [int] : int;
var valid : [int] : bool;

procedure main() {
  var p, q : int;

  call p := malloc();
  call q := malloc();

  mem[p] := 0;
  mem[q] := 0;
  
  while (*) {
    if (*) {
      mem[p] := mem[p] + 1;
    } else {
      mem[q] := mem[q] - 1;
    }
  }
  
  assert mem[p] >= 0;
  assert mem[q] <= 0;
}

procedure malloc() returns (ptr : int);
ensures !old(valid)[ptr];
ensures valid == old(valid)[ptr:=true];
  \end{lstlisting}
  \end{lrbox}
  \resizebox{\textwidth}{!}{
    \usebox\heapsepExampleLBox
  }
  \end{minipage}
  \hfill
  \begin{minipage}{0.47\textwidth}
  \begin{lrbox}{\heapsepExampleRBox}
  \begin{lstlisting}
var mem_1, mem_2 : [int] : int;
var valid : [int] : bool;

procedure main() {
  var p, q : int;

  call p := malloc();
  call q := malloc();

  mem_1[p] := 0;
  mem_2[q] := 0;
  
  while (*) {
    if (*) {
      mem_1[p] := mem_1[p] + 1;
    } else {
      mem_2[q] := mem_2[q] - 1;
    }
  }
  
  assert mem_1[p] >= 0;
  assert mem_2[q] <= 0;
}

procedure malloc() returns (p : int);
ensures !old(valid)[ptr];
ensures valid == old(valid)[p := true];
  \end{lstlisting}
  \end{lrbox}
  \resizebox{\textwidth}{!}{
    \usebox\heapsepExampleRBox
  }

  \end{minipage}
  \caption{Example of a program and its transformation.  The program serves also as the basis of our scalable benchmark
suite. --- The value of
    the variable  \texttt{mem} is a mathematical map.  It is used
    to model the memory.  The program transformation makes the
    independence of the two statements in the loop apparent.
    Intuitively, the two statements use
    the map \texttt{mem}  differently.  The transformation introduces
    \emph{diffent maps for different uses}.
}
  \label{fig_heapsep_example_original_prog}
  \label{fig_heapsep_example_transformed_prog}
\end{figure}

  % \caption{Example of a program and its transformation.  The value of
  %   the variable \nolinebreak \texttt{mem} is a mathematical map (used
  %   to model the memory).  Similarly, the value of the variable
  %   \texttt{valid} is a map (used to model allocation).  The
  %   independence of the two statements in the loop is made apparent by
  %   the transformation.  (``The two uses of the map \texttt{mem} are
  %   different'', or: ``there is no aliasing between the pointers
  %   \ttt{p} and \ttt{q} at any point during the execution'').  The
  %   independence

%%% Local Variables: 
%%% mode: latex
%%% TeX-master: "~/Documents/Research/AlexNutz/2019_vmcai/myMain"
%%% End: 

In the example, dynamically allocated memory is modeled by the two map variables
$\stm{mem}$ and $\stm{valid}$.
%C's pointers are modeled by a struct type that consists of two
%(mathematical) integers named \stm{base} and \stm{offset}.  The idea is that
%memory regions that may not interfere according to the C standard (in particular
%regions that were allocated by different malloc statements) may not interfere
%with each other.  
The map $\stm{mem}$ stores the contents of the memory.  The map $\stm{valid}$
stores which memory cells are allocated.  C's malloc function is modeled by the
procedure $\stm{malloc}$, which returns a memory location that is not
currently in use. (For simplicity we assume that all memory blocks are of size
1.)
%Note that our technique does not make any background assumptions on the meaning
%of the maps or, e.g., the \stm{malloc} procedure. All it leverages is the
%semantics of the programming language, which here implies that malloc will never
%return a value that is currently marked valid.

The procedure \stm{main} starts by allocating two pointers and storing them to
variables \stm{p} and \stm{q}.  The contents of both memory locations \stm{p}
and \stm{q} are initialized to 0.  Then, the value at location \stm{p} is
incremented nondeterministically often, and the value at location \stm{q} is
decremented nondeterministically often. The \stm{assert} statements express
that, at the end of the program the values in memory at \stm{p} and \stm{q}
contain a non-negative or a non-positive value respectively.

As an intermediate goal to correctness, a solver must prove that the operations
on memory cells \stm{p} and \stm{q} do not interfere.  A typical 
CEGAR-based, or bounded model %CEGAR:~\cite{DBLP:journals/jacm/ClarkeGJLV03}
checking-based, solver will need to do this for every spurious
counterexample.

Our technique provides a preprocessing such that the solver can instead prove
correctness of the transformed program on the right hand side of Figure
\ref{fig_heapsep_example_transformed_prog}.
In the transformed example, the map \stm{mem} has been
replaced by two maps \stm{mem\_1} and \stm{mem\_2}.
Memory accesses at \stm{p} are modeled by accessing \stm{mem\_1}, 
memory accesses at \stm{q} are modeled by accessing \stm{mem\_2}.
That way the solver does not need to prove non-interference between the
increment and decrement operations for each spurious counterexample, which
typically results in a dramatic speedup.

% of course, in this example there are only finitely many memory locations, so
% we could eliminate mem altogether..

% note: what if we did this in two separate while (*) loops ??

 \section{Preliminaries}
  In this section, we fix our notation regarding program syntax and semantics.

\paragraph{Program Syntax}

We distinguish two types of variables, \emph{map variables} and \emph{base 
variables}.
Map variables are named $\stm{a}, \stm{b}, \ldots$. 
We use $\stm{i}, \stm{j}, \ldots$ for base variables that are used as map
indices in the current context and $\stm{x}, \stm{y}, \ldots$ for all-purpose
base variables.
%We use $\stm{v}$ to refer to a variable that may be of any of the above kinds.
We use constant (or literal) expressions named $\lit, \lit_1, \lit_2, \ldots$.
We use a special variable $\stm{pc} \in \allvariables$ called
the \emph{program counter}.
We use typewriter font for program variables (e.g., \stm{i}, \stm{x}) and
italics for mathematical variables (e.g., $i$,$x$).

\emph{Expressions} in our programs can have one of three types.
\begin{align*}
  & \text{Expressions of base type:} 
   & \baseexp & \bnf \lit \mid \stm{x} \mid \stmselect{a}{i} \\
  & \text{Expressions of map type:} 
    & \mapexp & \bnf \stm{a} \mid \stmstore{a}{i}{x} \mid \stmconstarray{\lit} \\
  & \text{Boolean expressions:}
    & \boolexp & \bnf \stm{x==y} \mid \prognot \boolexp \mid \boolexp \progand
       \boolexp \mid \boolexp \progor \boolexp 
\end{align*}
The set of all \emph{commands} is generated generated by the following grammar. 
We refer to this set by \allcommands.
\begin{align*}
  c \bnf \stm{x:=\baseexp} \mid \stm{a:=\mapexp}  \mid \stm{havoc x} \mid \stm{havoc a}
        \mid \stm{assume \boolexp}
\end{align*}
The set of \emph{program locations}, \locations, is a set of distinct
identifiers $\{\loc, \loc', \loc_0, \loc_1, \ldots \}$.
A \emph{statement} is a triple of a source program location, a command, and a 
target program location, i.e., 
$\allstatements = \locations
\times \allcommands \times \locations$.
%a statement has the form $\sigma = (\loc, c,\loc')$.  
We use the letter $\sigma$ for statements.
Let $\sigma = (\loc, c,\loc')$ be a statement, then we refer to the \emph{source
location} of $\sigma$ by $\stmsrcloc(\sigma)$.
In contexts where the locations are not important we omit them from the
statement and write only the command.
We call statements whose command is of the form \stm{a:=a[i:=x]} \emph{map write
statements}, and we call statements whose command is of the form \stm{x:=a[i]}
\emph{map read statements}.
To highlight that a statement's command is a map write (read), we name the
statement \somewrite (\someread).

A \emph{program} $P$ is given as a control flow graph whose edges are 
statements.
Formally: $P = (\locations, \Sigma, \ell_0)$, 
where 
\locations is a set of locations, 
$\Sigma \subseteq \allstatements$ is a set of statements, and
$\ell_0 \in \locations$ is the initial location.
For technical reasons we do not allow incoming control flow edges at the initial
location.
%(4) $\delta \subseteq \locations \times \Sigma \times \locations$ is the
%transition relation.
A program $P$ induces a set of \emph{program variables}, $\variables$, which
are all the variables that occur in any of the statements of $P$. 
%When the
%program is clear from the context, we may write $\variables$.
We sometimes refer to only the basic variables $\variables_{base} \subseteq
\variables$ or only the map variables $\variables_{map} \subseteq \variables$.
We call the subset of $\Sigma$ that contains all the map write (read)
statements \writestatements (\readstatements).
From now on we assume the program $P$ is given as described here.

%\paragraph{Notational Remarks}
We do not allow equating maps in assume statements ($\stm{assume a==b}$).  
In our experience this restriction does not matter in practice.
%This feature would seriously complicate our
%program analysis as we will see later and we have not yet come upon a practical
%use case for it.  
%On the other hand, the restriction that only variables may
%occur in assume statements, rather than general base expressions, is not a
%restriction in expressive power but provides for a simpler presentation later in
%this work.
Furthermore, we only allow equalities between (base) variables, not between
expressions. This is not a proper restriction.

We will abbreviate $\stm{a:=a[i:=x]}$ as $\stm{a[i]:=x}$.
We may omit the case when the store is over a different map, like
$\stm{a:=b[i:=x]}$, from case distinctions, since it can be simulated by
a map update followed by a map assigment; in this case $\stm{a:=b}$ followed by
$\stm{a[i]:=x}$.
Also, we omit chains of stores applied to one map variable; again this omission
does not change the expressiveness of the programming language.

%For simplicity we assume that for each program there is a one-to-one
%correspondence between statements and edges, so we may refer to an edge 
%$(\loc, \sigma, \loc')$ by its statement $\sigma$.
%We refer to the source location \loc of that edge by $\stmsrcloc(\sigma)$.

\paragraph{Program Semantics}

For simplicity of presentation we consider only two sorts, namely the base sort
$\unintsort$ and the map sort $\mapsort$.
%We write $\sort(v)$ for the sort of variable $v$.

A \emph{state} in our program is a mapping from program variables to values from
our set of sorts.
The base variables, like $\stm{x}$ and $\stm{i}$ are assigned values of sort
$\unintsort$.
The map variables, like $\stm{a}$, are assigned values of sort $\mapsort$.
The Boolean sort $\{\true, \false\}$ occurs only during evaluation of Boolean
expressions.
The program counter variable \stm{pc} is a special case, its value denotes the location 
$\ell \in \locations$ that the execution is currently in. 

We use the (semantic) map update operator 
$\cdot\subst{\cdot}{\cdot} \colon (\mapsort) \times \unintsort \times \unintsort \to (\mapsort)$: 
Let $a$ be a map, then $a\subst{i}{x}$ is the map
that returns the value $a(j)$ for all arguments $j\neq i$ and the value $x$ for
the argument $i$.

For expressions $\stm{e}$ we give an evaluation function $\eval{\cdot}{\cdot}
\colon \states \times \allexpressions \to (\unintsort \cup (\mapsort))$, which,
given a valuation of the variables, assigns a value to $\stm{e}$:
Every literal has one value in $\unintsort$ it is associated with; the literal
evaluates to that value regardless of state. 
%Literals are semantically unequal if and only if they are syntactically
%different.
A variable is evaluated by looking up its value in the state.
A map variable's value is a map, a map access at some index evaluates to the
application of the evaluated map value to the evaluated  index value.
The semantics of the store operator is given as the above-mentioned map update
operator.
A constant map expression with some argument $\lit$ evaluates to a map whose
value is $\lit$ at every position.
The Boolean operators are evaluated as usual. Formally:
\begin{align*}
  \eval{s}{\lit} \define & \lit &
  \eval{s}{\stm{v}} \define & s(\stm{v}) \\
  \eval{s}{\stmselect{a}{i}} \define & \eval{s}{\stm{a}}(\eval{s}{\stm{i}}) &
  \eval{s}{\stmstore{a}{i}{x}} \define &
  \eval{s}{\stm{a}}\subst{\eval{s}{\stm{\stm{i}}}}{\eval{s}{\stm{x}}} \\
  \eval{s}{\stmconstarray{$\lit$}} \define & \lambda x \ldotp \lit &
  \eval{s}{\stm{e==e'}} \define & 
    \begin{cases}
      \true & \text{ if } \eval{s}{\stm{e}} = \eval{s}{\stm{e'}} \\
       \false & \text{ otherwise}
    \end{cases}
\end{align*}

The \emph{concrete post} operator $\postOp \colon 2^{\allstates} \times
\allstatements \to 2^{\allstates}$ is given as follows.
%Assume each statement in the following definition corresponds to an edge with
%source location \loc  and target location $\loc'$.
\begin{align*}
  \post{S}{\stmedge{\loc}{\stm{x:=\baseexp}}{\loc'}} \define & \{
    s\subst{\stm{pc}}{\loc'}\subst{\stm{x}}{\eval{s}{\stm{\baseexp}}} 
      \mid s \in S, s(\stm{pc}) = \loc \} \\
    \post{S}{\stmedge{\loc}{ \stm{a:=\mapexp}}{ \loc'}} \define & \{
    s\subst{\stm{pc}}{\loc'}\subst{\stm{a}}{\eval{s}{\stm{\mapexp}}} 
      \mid s \in S, s(\stm{pc}) = \loc \} \\
    \post{S}{\stmedge{\loc}{ \stm{havoc x}}{ \loc'}} \define & \{ s\subst{\stm{pc}}{\loc'}\subst{\stm{x}}{v} 
      \mid s \in S, s(\stm{pc}) = \loc, v \in \unintsort \} \\
  \post{S}{\stmedge{\loc}{ \stm{havoc a}}{ \loc'}} \define & \{ s\subst{\stm{pc}}{\loc'}\subst{\stm{a}}{v} 
      \mid s \in S, s(\stm{pc}) = \loc,\\ 
      & \qquad\qquad\qquad\qquad\ \ v \in \mapsort \} \\
  \post{S}{\stmedge{\loc}{ \stm{assume e}}{ \loc'}} \define & \{ s\subst{\stm{pc}}{\loc'} 
      \mid s \in S, s(\stm{pc}) = \loc, \eval{s}{\stm{e}} = \true \} 
\end{align*}

%A program $P = (\Sigma, \locations, \ell_0, \delta)$ induces a set of
%\emph{executions}.  
%An execution is an alternating sequence of states and
%statements that starts in an initial state of $P$ and ends in some state such
%that each subsequence $\execution{s, \sigma, s'}$, consisting of a state, a
%statement and another state, is related by $\postOp$.  
%Formally:
%\begin{align*}
%  \allexecutions \define \{ \execution{s_0, \sigma_0, \ldots, \sigma_n, s_{n+1}}
%    & \mid s_0(\stm{pc}) = \initloc \\ \land % \in \states_\mathit{init} \land 
%    & \forall i \in \{0, \ldots, n\} \ldotp s_{i+1} \in \post{s_i}{\sigma_i} \}
%\end{align*}
%For executions we use the symbols $e, e'$. Sometimes we write an execution as a
%prefix execution concatenated with some subsequent states and statements. 
%E.g.  $e = \execution{e', \sigma, s_2}$ describes an execution that starts
%with a prefix $e'$ (which also is an execution) and goes on with the sequence
%$\execution{\sigma, s_2}$.

An \emph{execution} $e$ is a sequence of statements and states in alternation, i.e.,
$$e = \execution{s_0}{\sigma_0, \ldots, \sigma_{n-1}, s_n}.$$
Every execution starts in an initial state, i.e., a state $s_0$ where the
program counter \stm{pc} is assigned the initial location \initloc.
Furthermore, the sequence must be consecutive, i.e., for all $i$ from 0 to
$n-1$, the state $s_{i+1}$  must be contained in the set of post states of the
state $s_i$ under the statement $\sigma_i$, i.e., 
$$s_{i+1} \in \post{\{s_i\}}{\sigma_i}.$$  
A special case are the empty executions, an empty execution $s_0$ consists of an
initial state only.
We can write every non-empty execution as $\execution{e}{\sigma, s}$ where $e$ is an
execution.  We denote the set of all executions \allexecutions.

The \emph{reachable states} are all states $s$ such that there is an execution that
ends in $s$.
$$\reach \define \{ s \mid \exists e \in \allexecutions \ldotp e = e' s \}$$

 \section{Dependency Analysis}
  \label{chap_heapsep_sec_dataflow}
  Our program transformation is based on an analysis of the dependencies between
the statements in the program $P$.
In this section, we describe a property that makes explicit which map update 
statements may be reponsible for the value of a map at some index at some
program location. For this, we introduce the relation \lastwrites (read: ``last
writes'') that contains for a potential read in the program all the map updates
that are relevant for that read in some execution of the program.

% moved this to discussion:
%May-alias analysis forms the basis for the memory partitioning techniques for C
%programs we mentioned in the 
%introduction~\cite{DBLP:conf/vmcai/RakamaricH09,DBLP:conf/vmcai/0062BW17,DBLP:conf/sas/GurfinkelN17}.
%We revisit the basic idea of of may-alias analysis from our perspective.  We
%show why this property is not suited for our program transformation as it would
%detect too many spurious data dependencies in our setting. 
%As a consequence, the
%program transformation would be unable to separate map variables often enough
%(typically: never).

%Afterwards, we introduce a stronger property that precisely characterises the
%data dependencies between map write and map read statements. Based on this
%precise property we will present an instrumentation of the program that in
%principle allows for an exact calculation of dependencies in the program.
%In practice, we will run an abstract interpretation that will return an
%approximate result that allows for a good degree of map separation.

%We define a program analysis that is tailored to our needs.  The analysis must
%allow us to identify which map updates may influence which map reads in the
%given program.
%Intuitively, we want to know given one map read in the program, which map
%stores we can omit from the program without influencing the read value at that
%map read in any execution.

% snip --> discusion

%\subsection{The Relation \lastwrites}
%We present a property that precisely describes all possible data
%dependencies between map-accessing statements. 
%\TODO{related work zu der property}

\paragraph{Last Write Relation \lastwrites}
%We give a characterization of the relation \lastwrites %(read: ``last writes'')
%in a declarative fashion, analogous to how we described the may-alias relation
%$\mathit{MA}$.

The relation 
$\lastwrites \subseteq \writestatements \times \readstatements$
relates all map write statements \somewrite
to all the map read statements \someread such that \somewrite is responsible for
the value that is read in \someread in some execution. %

%In order to keep our presentation of the temporal property in this
%subsection as simple as possible, we use the notion of a \emph{timestamp}.
%Intuitively, whenever a value is written to a map, that value is made unique
%within the current execution by attaching a timestamp to it.
%This can be realized by slightly changing the sort of the map contents and statement
%semantics: A map write statement $\somewrite = \stm{a[i]:=x}$ not only 
%assigns the value of \stm{x}, say $v$, to the map but a pair $(v,t)$, where $t$
%is the so-called timestamp. 
%The timestamp is an integer value that is incremented in each transition of each
%execution.  This allows us to easily identify which write statement was
%responsible for which entry in each map.

%One way to describe \lastwrites is that it is the smallest relation between the
%set of read statements and the set of write statements in a program that
%fulfills the following condition:
%Then, the relation $\lastwrites \colon \statements \times 2^\statements$ is the
%smallest relation that fulfills the following condition. 

\begin{definition}[Last Writes Relation \lastwrites]
The Last Write relation $\lastwrites \subseteq \writestatements \times \readstatements$ 
  contains a pair $(\somewrite, \someread)$, where the command in \somewrite is of the form
\stmem{a[i]:=x}, and the command in \someread is of the form \stmem{y:=b[j]},
  %such that
%If there are two statements \somewrite, \someread in $\statements$ such that
%  \begin{itemize}
%    \item $\somewrite = (\loc_1, \stmem{a[i]:=x}, \loc_2) \in \statements$ is a map write
%  statement, and
%    \item $\someread = (\loc_3, \stmem{y:=b[j]}, \loc_4) \in \statements$ is a map read
%  statement,  
%  \end{itemize}
whenever there is an execution $e$ and a value $v$ such that $v$ is written by
  \somewrite and is read by \someread, i.e., if $e$ fulfills the following
  linear time property. 
  \begin{align*}
     \ltldiamond ( \stm{pc} = \stmsrcloc(\somewrite) & \land
    \stm{x} = v \land 
     \ltldiamond (\stm{pc} = \stmsrcloc(\someread) \land \stm{b[j]} = v))
\end{align*}
\end{definition}
In this definition we assume that every value that is written to a map during an
execution is unique; this can be accommodated by providing each value with a
timestamp.
Furthermore, in this definition \stmem{a} and \stmem{b} may or may refer to
the same program variables, the same holds for, \stmem{i} and \stmem{j} and
\stmem{x} and \stmem{y}.

\paragraph{Alternative Characterisation of the Last Writes Relation \lastwrites}
%We give a constructive definition of the relation \lastwrites.
We provide an alternative characterisation of the Last Writes relation
\lastwrites.
This characterisation will lead to an instrumentation of the program that
will allow us to compute an relation \lastwritessharp that overapproximates the
Last Writes relation.
%lead to an instrumentation 
%that we will
%present in the next section and that will allow us to compute an approximation
%of \lastwrites in practice.

%In order to treat the special case, when a map reads a value from a location
%that has not been written to yet, we introduce the special symbolic \emph{no-write
%statement} $\nonwrite \not \in \Sigma$.
%Let $\someread = \stm{x:=a[i]}$ be a read statement.
%When $\lastwrites$ contains the pair $(\nonwrite, \someread)$, this implies that
%there exists an execution that visits $\someread$ where the map variable \stm{a}
%has not yet been written at position \stm{i}.

We next define the function  $\lastwrite$ which, given a position $i$,
given a map
\stm{a}, and given an execution $e$, returns the write statement \somewrite that is
responsible for the value that the map \stm{a} has at position $i$ in the last
state of the execution $e$.  
For technical reasons we will use the symbol \nonwrite (to cater for
the case where the map \stm{a} has not been written at
position $i$ in execution $e$).

Formally, we define the function $\lastwrite \colon \mapvars
\times \unintsort \times \allexecutions \to \writestatements \cup
\{\nonwrite\}$
% , 
% which returns for each map variable \stm{a}, map index $j$, and
% execution $e$ the write statement that is responsible for the value that the map
% \stm{a} has at position $j$ at the end of execution $e$.
% We define \lastwrite 
by induction over the length of the execution $e$.
(As explained above, an execution of length 0 is of the form $s_0$ where $s_0$
is an initial state, and an execution of length $n+1$ is of the form
$\execution{e}{\sigma, s}$
 where $\sigma$ is a statement and
$s$ is a state.)
%$\execution{e}{\sigma_n, s_{n+1}})$ 
%$\execution{e}{\sigma, s)$.)
\begin{align*}
  \lastwrite(\stm{a}, j, s_0) \define & 
    \nonwrite \\ %\text{ where } s_0 \text{ is an initial state}\\
  \lastwrite(\stm{a} , j, \execution{e}{\stm{havoc a}, s}) \define & 
    \nonwrite \\
  \lastwrite(\stm{a} , j, \execution{e}{ \stm{a:=\stmconstarray{\lit}}, s}) \define & 
    \nonwrite \\
  \lastwrite(\stm{a} , j, \execution{e}{ \stm{a[i]:=x}, s}) \define &
    \begin{cases} 
      \stm{a[i]:=x} & \text{ if } s(\stm{i}) = j \\
      \lastwrite(\stm{a}, j, \execution{e}{}) & \text{ if } s(\stm{i}) \neq j
    \end{cases}\\
  \lastwrite(\stm{a} , j, \execution{e}{ \stm{a:=b}, s}) \define & 
    \lastwrite(\stm{b} , j, \execution{e}{}) \\
  \lastwrite(\stm{a} , j, \execution{e}{ \sigma, s}) \define & 
    \lastwrite(\stm{a} , j, \execution{e}{}) 
    \text{ if  $\execution{e}{ \sigma, s}$ matches none of the above}
\end{align*}

Intuitively, the definition of $\lastwrite(\stm{a}, j, e)$ traces the value of
the map \stm{a} at index $j$ back within the execution $e$ until
it hits the map write statement that is responsible for the fact that \stm{a}
has that value at position $j$ at the end of $e$.
This write statement is returned by \lastwrite.
If the execution consists only of an initial state $s_0$, or the last statement was
a havoc statement with argument \stm{a}, or when \stm{a} has been set to a
constant map by the last statement, then no value in \stm{a} depends on a map write
statement, so \lastwrite returns the symbol \nonwrite.  
If the last statement in the execution has been a write to
map \stm{a}, then \lastwrites checks whether the write was at position $j$. If
that is the case, the last write is returned, otherwise \lastwrite recurses on
the prefix of the execution where the write statement and its successor state
have been dropped.  If the last statement in the execution assigned another map
\stm{b} to \stm{a}, the \lastwrite recurses on the execution prefix, and it
looks for writes on \stm{b} instead of writes on \stm{a}.
Otherwise, the last statement in the execution had no influence on values in
\stm{a}, so it is evaluated recursively on the prefix without the last statement
and state.

As above, the Last Writes relation \lastwrites relates all the write statements
\somewrite to all the read statements \someread, such that there is an execution
where \somewrite is responsible for the value that \someread reads.
From the function $\lastwrite$ we build the explicit characterization of the relation
$\lastwrites \subseteq \writestatements \times \readstatements$
 as follows.
 \begin{align*}
   \lastwrites \define  \{ & (\somewrite, \someread) \mid 
    \someread = (\loc, \stm{x:=a[i]}, \loc')  \\
   & \land \exists \, \execution{e}{ s} \in \allexecutions \ldotp    %, \stm{x:=a[i]}, s' 
       s(\stm{pc}) = \loc \land s(\stm{i}) = i 
       \land \lastwrite(\stm{a}, i, \execution{e}{ s}) = \somewrite \\
     & \land \somewrite \neq \nonwrite \} \\
\end{align*}

%Note that the symbol \nonwrite is no longer used here;
%its only purpose was to make \lastwrite a total function.

%\paragraph{Computational Difficulty}
%\TODO{ .. probably: not computable, bc aliasing... what if we knew aliasing?} 

 \section{Computing Dependencies}
  \label{chap_heapsep_sec_preprocessing}
  In this section, we present an instrumentation of the program $P$ such that the
Last Writes relation \lastwrites can be expressed in terms of the set of
reachable states of the instrumented program \progpre.

%we describe how a map manipulating program $P$ can be instrumented
%to a program \progpre in such a way that the relation \lastwrites of $P$ can be
%easily obtained from an analysis of the reachable states of \progpre.
%An overapproximation of \lastwrites of $P$ can accordingly be obtained from an
%overapproximation of the reachable states of \progpre.

\subsection{Instrumentation}
%In our program transformations, we will sometimes use the \emph{composition of
%commands}, like $c \ttt{;} c'$.  This is extension of syntax is purely
%syntactic sugar. Instead we could introduce a fresh location into the control
%flow graph and insert two statements with the commands $c$ and $c'$.
%
%
%We use the following shorthand for Boolean program expressions that equate a
%number of subexpressions:
%$$\stm{eq}(e_1, e_2,e_3,\ldots, e_{n-1}, e_{n} ) \define e_1 \stm{==} e_2
%\progand e_2 \stm{==} e_3 \progand \ldots \progand e_{n-1} \stm{==} e_{n} $$

%For the instrumented program we introduce for every map-variable $\stm{a}\in
%\mapvars$ a fresh map-variable \stm{a-lw}.
%\stm{a-lw} has the sort $\unintsort \to \Sigma \cup \nonwrite$, that means it has the same
%domain as \stm{a} but its codomain are the (map write) statements of
%$P$, plus the symbolic statement \nonwrite we introduced in the previous section.
%(To be totally correct here, we would have to introduce mappings from map
%variables to their \stm{lw}-counterparts, and from statements to literals for
%the sort \unintsort.  We use these shorthands to keep our notation clean.)

We introduce an auxiliary map variable \stm{a-lw} for every map-variable \stm{a} that
occurs in the program $P$.
The values of the maps that are assigned to \stm{a-lw} are not values from our
base sort \unintsort, but instead are symbols that refer to write statements
that occur in $P$.

%To obtain the instrumented program \progpre, we retain the structure of $P$'s
%control flow graph and only transform the commands labeling the edges of the
%graph. 
%(Parts of the transformer depend on the statements however, because the
%new command depends on where the old command appears in the program.)
Intuitively, the transformation is designed in such a way that the fresh
\stm{lw}-maps capture the results of the \lastwrite-function for each program
location.
We construct the transformation in three steps.
We begin by defining by a transformer 
 $\rdtransformer^c \colon \allcommands \to \allcommands$
for some commands whose transformation result does not depend on
their location in the program.

If the command $c$ is a havoc to map variable \stm{a}, or if
$c$ assigns a constant map to \stm{a}, then \stm{a-lw} is assigned a
constant map that contains the symbol \nonwrite at all positions.
This represents that no write statement has an influence on any value in the
map \stm{a} after the command $c$ has been executed.
If $c$ assigns the value of a map variable to another map variable, then
the analogous assignment is done on the respective \stm{lw}-maps.
This expresses that all map write statements that have an influence on \stm{a}
also have an influence on \stm{b} after the command $c$ has been executed.
In all other cases, the transformation $\rdtransformer^c$ leaves the command $c$
unchanged.  
\begin{align*}
  \rdtransformer^c(\stm{havoc a}) \define & 
    \stm{havoc a; a-lw:=\stmconstarray{\nonwrite}} \\
  \rdtransformer^c(\stm{a:=\constarray{\lit}}) \define & 
    \stm{a:=\constarray{\lit}; a-lw:=\stmconstarray{\nonwrite}} \\
  \rdtransformer^c(\stm{b:=a}) \define & \stm{b:=a; b-lw:=a-lw} \\
  \rdtransformer^c(c) \define & c
      \text{ where none of the other cases apply} 
\end{align*} 

From $\rdtransformer^c$ we construct the transformer 
$\rdtransformer^\sigma \colon \allstatements \to \allstatements$,
which transforms the map write statements.
Whenever a map variable $\stm{a}$ is written to at index $\stm{i}$, then
$\stm{a-lw}$ is written at the same index, but with a special value that
identifies the updating statement. 
Statements that are not map write statements are left unchanged by
$\rdtransformer^\sigma$.
\begin{align*} 
  \rdtransformer^\sigma(\somewrite) \define & 
     \stmedge{\loc}{\stm{a[i]:=x; a-lw[i]:=}\somewrite}{\loc'} \\ 
    & \text{ where } \somewrite = \stmedge{\loc}{ \stm{a[i]:=x}}{ \loc'} \\
  \rdtransformer^\sigma((\loc, c, \loc')) \define & (\loc, \rdtransformer(c), \loc')
     \text{ where } (\loc, c, \loc') \not \in \writestatements
\end{align*}

The final statement transformer 
$\rdtransformer \colon \allstatements \to \allstatements$
updates statements that originate from the initial
location \initloc.
Because at the initial location no map writes have been executed, we set every
\stm{lw}-variable to a constant map containing the symbol $\nonwrite$.
(Note that we assume that the initial location has no incoming statements.)
\begin{align*}
  \rdtransformer(\sigma) \define 
    & c \stm{;} \\
    & \stm{a-lw:=\stmconstarray{\nonwrite};} \\
    & \ldots \\
    & \stm{z-lw:=\stmconstarray{\nonwrite}} \\
    & \text{ where } \rdtransformer^\sigma(\sigma) = (\initloc, c, \loc)
     \text{ and }  \mapvars = \{ \stm{a}, \ldots, \stm{z} \}
\end{align*}
%
%occurs in $P$. 
%The transformation $\rdtransformer(\sigma)$ depends on the statement type of
%$\sigma$.
%At statements that originate in the initial location, all \stm{lw}-maps
%are initialized to the same constant map. %(one of the below cases may still be applied to .
%If the command of a statement $c$ is a havoc to map variable \stm{a}, or if
%$\sigma$ assigns a constant map to \stm{a}, then \stm{a-lw} is assigned a
%constant map that contains the symbolic statement \nonwrite at all positions.
%If $c$ is a write to map variable $\stm{a}$ at index $\stm{i}$, then
%$\stm{a-lw}$ is written at the same index, but with a special value that
%identifies the update statement.
%If $c$ assigns the value of a map variable to another map variable, then
%the same assignment is done on the respective \stm{lw}-maps.
%In all other cases, the transformation leaves the statement unchanged.

%Intuitively, after a map assignment $\stm{b:=a}$, the writes that have
%happened on $\stm{a}$ before, are relevant for reads that happen on $\stm{b}$ in
%the remainder of the program, thus we copy the contents of the respective
%\stm{lw}-maps.

%That way, for every execution of $\progpre$ the value that $\stm{a-lw[i]}$ has at
%that point identifies the program location where $\stm{a[i]}$ was written last.
%

%From \rdtransformer we construct the instrumented program $\progpre$ as
%follows.  
%\begin{align*}
%  P_\lastwrites & \define (\Sigma_\lastwrites,
%    \locations, \ell_0) \\  \text{ where } 
%  \Sigma_\lastwrites & \define 
%\{ \rdtransformer(\sigma) \mid \sigma \in \Sigma \}
%\end{align*}

We are now ready to define the instrumented program \progpre.
We define the instrumented program \progpre through applying the transformation
function \rdtransformer to each statement in $\statements$. Formally:
$$\progpre \define \{ \locations, \{ \rdtransformer(\sigma) \mid \sigma \in \Sigma \}, \initloc\}$$

%We can observe that this definition of the program instrumentation \rdtransformer
%closely resembles the definition of the function \lastwrite from the previous
%subsection.
%In fact, 
We can now express the Last Write relation \lastwrites through the set of
reachable states of the instrumented program \progpre.
% , which is constructed using the
%function \lastwrite, from the reachable states of \progpre as follows.
%
\begin{proposition}
  \label{chap_heapsep_prop_lw_vs_a-lw}
%  Let $\loc$ be some location, let \stmem{a} be a map variable, 
%  let  \stmem{i} be a base variable.
%  The possible values of the array $\stmem{a-lw[i]}$ at location $\loc$ in the
%  reachable states of $\progpre$ equal the set $\lastwrites(\stmem{a},
 % \stmem{i}, \loc)$ as defined above for $P$. 
  The Last Writes relation \lastwrites as defined in Section
  \ref{chap_heapsep_sec_dataflow} is identical to the relation that relates a
  map write statement \somewrite in \writestatements to a map read statement
  \someread in \readstatements of the form $(\loc, \stm{x:=a[i]}, \loc')$ if
  there is a state $s$ in the set of reachable states of the instrumented
  program \progpre such that the program counter \stm{pc} points to the source
  location of \someread, \loc, and the value that $s$
  assigns to the map read expression \stm{a-lw[i]} is the write statement
  \somewrite.
  Formally:
  \emph{
    \begin{align*}
     \lastwrites = \{ (\somewrite, \someread) & \mid 
       \someread = (\loc, \stm{x:=a[i]}, \loc')  \\
      & \land \exists s \in \reach(\progpre) \ldotp 
	\eval{s}{\stm{pc}} = \loc \land \eval{s}{\stm{a-lw[i]}} = \somewrite\}% \\
	%& \land \somewrite \neq \nonwrite \}
    \end{align*}
	}
\end{proposition}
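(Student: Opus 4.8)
The plan is to prove the two relations equal by establishing a single bridging lemma: for every execution $e = e' s$ of the original program $P$ and every map variable $\stm{a}$ and index value $j$, the corresponding execution $\hat e$ of the instrumented program $\progpre$ satisfies $\eval{\hat s}{\stm{a-lw}}(j) = \lastwrite(\stm{a}, j, e)$, where $\hat s$ is the final state of $\hat e$. Here $\hat e$ is the unique execution of $\progpre$ obtained from $e$ by replacing each statement $\sigma_k$ with $\rdtransformer(\sigma_k)$ and threading the extra \stm{lw}-variables through; since $\rdtransformer$ only adds assignments to fresh variables, the projection of $\hat e$ onto the original variables is exactly $e$, and conversely every execution of $\progpre$ arises this way. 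Once this lemma is in hand, the proposition follows by unfolding the explicit characterisation of \lastwrites from Section~\ref{chap_heapsep_sec_dataflow}: $(\somewrite, \someread) \in \lastwrites$ with $\someread = (\loc, \stm{x:=a[i]}, \loc')$ iff there is $e's$ with $s(\stm{pc}) = \loc$, $s(\stm{i}) = i$, and $\lastwrite(\stm{a}, i, e's) = \somewrite \neq \nonwrite$; by the lemma the last condition is equivalent to $\eval{\hat s}{\stm{a-lw}}(s(\stm{i})) = \somewrite$, i.e.\ $\eval{\hat s}{\stm{a-lw[i]}} = \somewrite$, and $\hat s$ ranges exactly over $\reach(\progpre)$ with $\eval{\hat s}{\stm{pc}} = \loc$.

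The bridging lemma is proved by induction on the length of $e$, following the case split in the definition of $\lastwrite$. In the base case $e = s_0$: $\rdtransformer$ applied to the outgoing statements of \initloc\ appends $\stm{a-lw:=\stmconstarray{\nonwrite}}$ for every map variable, but more to the point, since \initloc\ has no incoming edges, any reachable non-initial state has passed through exactly one such instrumented edge; for the genuinely length-$0$ execution there is nothing past $s_0$, and we only need the claim at states reachable by at least one step, so the relevant base case is really the first step, handled uniformly with the inductive step below. For the inductive step with $e's$ where the last statement is $\sigma$: if $\sigma$ is $\stm{havoc a}$ or $\stm{a:=\stmconstarray{\lit}}$, then $\rdtransformer$ adds $\stm{a-lw:=\stmconstarray{\nonwrite}}$, so $\eval{\hat s}{\stm{a-lw}}(j) = \nonwrite = \lastwrite(\stm{a},j,e's)$; if $\sigma$ is $\stm{a:=b}$, then $\rdtransformer$ adds $\stm{b-lw:=a-lw}$ — wait, it adds $\stm{b-lw:=a-lw}$ when the command is $\stm{b:=a}$, so for $\stm{a:=b}$ it adds $\stm{a-lw:=b-lw}$, giving $\eval{\hat s}{\stm{a-lw}}(j) = \eval{\widehat{(e')}}{\stm{b-lw}}(j)$, which by induction equals $\lastwrite(\stm{b},j,e') = \lastwrite(\stm{a},j,e's)$; if $\sigma$ is $\stm{a[i]:=x}$, then $\rdtransformer^\sigma$ adds $\stm{a-lw[i]:=}\somewrite$ after it, so $\eval{\hat s}{\stm{a-lw}}(j)$ is $\somewrite$ when $s(\stm{i}) = j$ (note $s$ is the post-state, in which $\stm{i}$ still has its pre-value since the write to \stm{i} is not performed — one must check the evaluation order of the compound command $\stm{a[i]:=x; a-lw[i]:=}\somewrite$ uses the same valuation of \stm{i}) and is the recursively-computed value otherwise, matching $\lastwrite$ exactly; and if $\sigma$ is any other command ($\stm{x:=\baseexp}$, $\stm{havoc x}$, $\stm{assume \boolexp}$, or a write/havoc/const-assignment to a \emph{different} map), then $\rdtransformer$ leaves the \stm{a-lw}-variable untouched, so its value at $j$ is unchanged from $\widehat{(e')}$, matching the final ``none of the above'' clause of $\lastwrite$.

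The main obstacle I expect is bookkeeping around the compound commands introduced by the instrumentation and the claim that executions of $P$ and $\progpre$ are in bijection modulo the fresh variables. Concretely: the paper's syntax does not literally contain sequential composition $\stm{c_1; c_2}$ as a single command, so one must first fix the convention that $\rdtransformer(\sigma)$ denotes a short chain of statements through fresh intermediate locations, and then argue that this chain is atomic in the sense that no original-program edge can interleave — which follows because the intermediate locations are fresh and have exactly one outgoing edge. Also delicate is the timestamp/uniqueness assumption from the \lastwrites\ definition: the proposition as stated does not re-mention it, but it is exactly what guarantees that $\lastwrite$ returning $\somewrite$ is equivalent to ``the value at position $i$ was last written by $\somewrite$,'' so the proof should note that the instrumentation tracks statement identities directly and hence does not even need the timestamp trick — the equivalence of the two formulations of \lastwrites\ is what carries that subtlety. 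Everything else is a routine structural induction.
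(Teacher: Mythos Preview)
The paper does not actually supply a proof of this proposition; it is stated and then the text moves directly on to Lemma~\ref{chap_heapsep_lemma_bisim_P_Ppre}. So there is no ``paper's own proof'' to compare against. Your approach---prove by induction on the length of an execution $e$ of $P$ that the corresponding execution of $\progpre$ satisfies $\eval{\hat s}{\stm{a-lw}}(j) = \lastwrite(\stm{a},j,e)$, then unfold the explicit characterisation of $\lastwrites$---is the natural one and is sound. The case analysis you give lines up with the clauses of $\lastwrite$ and of $\rdtransformer$, and your observation that the bijection between executions of $P$ and $\progpre$ follows from the instrumentation being ghost code is exactly what the paper itself leans on (immediately after, for Lemma~\ref{chap_heapsep_lemma_bisim_P_Ppre}).

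One point worth tightening: you correctly sense that the length-$0$ base case is awkward, because in an initial state of $\progpre$ the \stm{lw}-maps are unconstrained (the instrumentation only sets them to $\stmconstarray{\nonwrite}$ on the \emph{outgoing} edges of $\initloc$). If a read statement had source location $\initloc$, the right-hand side of the proposition could in principle pick up spurious pairs from an arbitrary initial valuation of \stm{a-lw}. Your proposal waves this away by saying ``we only need the claim at states reachable by at least one step,'' but that is not literally true as the proposition is phrased. The clean fix is either to observe that the paper's convention (no incoming edges at $\initloc$) is meant to go together with an implicit constraint that initial states already have $\stm{a-lw} = \stmconstarray{\nonwrite}$, or to note that any read at $\initloc$ would have $\lastwrite = \nonwrite$ anyway and so contributes nothing to $\lastwrites$ on the left-hand side, and then argue that the instrumentation can be read as constraining the initial \stm{lw}-values accordingly. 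Either way this is a bookkeeping detail, not a gap in your argument.
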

%\begin{proof}
%  This can be easily seen by induction over the length of program executions
%  with a case distinction over the statements. \qed
%\end{proof}

We state the following lemma for later reference (proof of
%The following lemma states that the transformation to $\progpre$ preserves the
%behaviour of $P$. It will be useful for proving 
Theorem~\ref{chap_heapsep_theorem_bisim_P_Pprime} in 
Section~\ref{chap_heapsep_sec_transformation}).

\begin{lemma}
  \label{chap_heapsep_lemma_bisim_P_Ppre}
 %There is a bisimulation between $P$ and $\progpre$.
 $P$ and $\progpre$ are bisimulation-equivalent.
\end{lemma}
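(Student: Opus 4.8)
The plan is to exhibit an explicit bisimulation relation $R$ between the transition systems of $P$ and $\progpre$ and then check that it matches up their initial states. The guiding observation is that $\rdtransformer$ leaves the location set \locations unchanged (indeed $\progpre$ is defined over the same \locations) and, on every statement, only \emph{appends} assignments to the fresh auxiliary map variables $\stm{a-lw}$ — one per map variable $\stm{a}$ of $P$ — writing to $\stm{a-lw}$ at the same index $\stm{i}$ in the map-write case. None of these auxiliary variables occurs in any command of $P$ or in any guard of an \stm{assume}, and the appended assignments are ordinary total, deterministic assignments (never \stm{havoc}, never \stm{assume}). So, intuitively, $\progpre$ is $P$ equipped with ghost state that is written but never observed by the original code, and the natural bisimulation simply forgets that ghost state.

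Concretely, writing $\variables$ for the variables of $P$ and $s|_{\variables}$ for the restriction of a $\progpre$-state $s$ to $\variables$ (which includes \stm{pc}), I take $R = \{\, (s,s') \mid s \text{ a state of } P,\ s' \text{ a state of } \progpre,\ s = s'|_{\variables} \,\}$. First I would dispatch initial states: a state is initial exactly when it maps \stm{pc} to \initloc, with all other variables — in particular the $\stm{lw}$-maps — unconstrained; hence every initial state of $P$ equals $s'|_{\variables}$ for some initial state $s'$ of $\progpre$, and conversely, so $R$ relates the two sets of initial states in both directions. (This is where the assumption that \initloc has no incoming edge is convenient.)

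Then I would discharge the two simulation conditions. For the forward direction, given $(s,s') \in R$ and a transition $t \in \post{\{s\}}{\sigma}$ of $P$ with $\sigma = (\loc, c, \loc')$: by inspection of $\rdtransformer^c$, $\rdtransformer^\sigma$ and $\rdtransformer$, the transformed statement $\rdtransformer(\sigma)$ has on $\variables$ exactly the effect of $c$ — the only altered commands are \stm{havoc a}, the constant-map assignment, \stm{b:=a}, the map writes, and the statements leaving \initloc, and each alteration is purely an additional assignment to $\stm{lw}$-variables or the additional indexed write to $\stm{a-lw}$ — followed by a deterministic update of some $\stm{lw}$-variables. Since $s'(\stm{pc}) = \loc$ the transformed statement is enabled at $s'$, and since the extra updates neither block nor branch there is a (unique) $t'$ with $t' \in \post{\{s'\}}{\rdtransformer(\sigma)}$ and $t'|_{\variables} = t$, so $(t,t') \in R$. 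For the backward direction, given $t' \in \post{\{s'\}}{\rdtransformer(\sigma)}$ of $\progpre$: because no original command reads a $\stm{lw}$-variable and no guard mentions one, deleting the appended $\stm{lw}$-assignments yields a valid transition $t \in \post{\{s\}}{\sigma}$ of $P$ with $t = t'|_{\variables}$ — here one uses that a \stm{havoc a} in $\progpre$ may choose the same map value as in $P$, and that an \stm{assume} in $\progpre$ is enabled under precisely the same condition since its guard is untouched — so again $(t,t') \in R$.

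The case analysis is a short command-by-command check against the definition of $\postOp$; the only case worth a second look is the map write $\somewrite = (\loc, \stm{a[i]:=x}, \loc')$, transformed to $(\loc, \stm{a[i]:=x; a-lw[i]:=}\somewrite, \loc')$, where one confirms that the indexed write to $\stm{a-lw}$ touches no variable in $\variables$ (because $\stm{a-lw} \notin \variables$) and, likewise, that the resets of \emph{all} $\stm{lw}$-variables appended to the statements out of \initloc are invisible on $\variables$. I expect the main obstacle to be conceptual rather than computational: getting $R$ right, namely that it must be a \emph{relation} and not a function (the $\stm{lw}$-maps, like every non-\stm{pc} variable of an initial state, are unconstrained and get overwritten only after the first step), and that the backward direction genuinely relies on the instrumentation introducing no new nondeterminism and never disabling a transition. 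Finally, since $R$ preserves \stm{pc} and every program variable of $P$, it is a bisimulation with respect to any labeling of states by locations and (functions of) original variables, which is the notion of equivalence needed here; as a byproduct one obtains $\reach(P) = \{\, s|_{\variables} \mid s \in \reach(\progpre) \,\}$, the identity underlying Proposition~\ref{chap_heapsep_prop_lw_vs_a-lw}.
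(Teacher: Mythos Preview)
Your proposal is correct and follows the same underlying idea as the paper: the instrumentation introduced by $\rdtransformer$ is ghost code---fresh $\stm{lw}$-variables that are written but never read in any original command or guard---so projecting a $\progpre$-state onto $\variables$ yields the desired bisimulation. The paper simply asserts this in one sentence (``obvious from the fact that the additional commands introduced by the transformation is ghost code''), whereas you spell out the projection relation $R$ and the command-by-command check; your version is a rigorous elaboration of the paper's hand-wave rather than a different argument.
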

%
%\begin{proof}
%  Let $t\!\upharpoonright_{\domain(s)}$ be the restriction of the state $t$
%  to the domain of state $s$.
%  We construct the relation $\bisimpre \subseteq \states(P) \times
%  \states(\progpre)$ as follows.
%  $$s \bisimpre t \text{ iff } s = t\!\upharpoonright_{\domain(s)}$$
%%
% Clearly, \bisimpre is a bisimulation because the additional code introduced by
%  \rdtransformer manipulates only variables that do not occur in $P$. \qed
%\end{proof}

The proof of this lemma is obvious form the fact that the additional commands
introduced by the transformation is ghost code.

\subsection{Computing an Overapproximation of the Last Writes Relation \lastwrites}

We have seen that the relation \lastwrites can be expressed through the
set of reachable states of the instrumented program \progpre.
The set of reachable states is not computable in general.
Thus, we apply a static analysis that computes an overapproximation of the
set of reachable states.
% in order to obtain an overapproximation of \lastwrites. 
%
%In principle, every program analysis that is sound (i.e. overapproximating) is
%applicable here.
%In order to obtain an analysis result that allows a good degree of map variable
%separation, we particulary rely on the following features:
%\begin{itemize}
%  \item Support for maps, in particular by-value assignments.
%  \item Ability to infer enough disequalities between variables used to access
%    maps.
%  \item Support for constant maps.
%\end{itemize}
%In practice, 
%We use abstract interpretation~\cite{DBLP:conf/popl/CousotC77} over
%an abstract domain that fits our task precisely: The Map Equality
%Domain~\cite{Dietsch2018mapequalitydomain}.

The static analysis must be able to handle programs that manipulate maps.
An example is a static analysis based on the Map Equality
Domain~\cite{Dietsch2018mapequalitydomain}.
This domain is useful to infer equalities and disequalities between expressions
which can involve maps.

%The Map Equality is an abstract domain that infers equalities and
%disequalities between expressions (variables, constants, map reads) in programs
%with maps. It support all the features required for a precise-enough computation
%of the values that the \stm{lw}-maps may assume.

We have implemented an extension of the Map Equality Domain.
The extensions supports constraints of the form $\stm{x} \in \{\lit_1, \lit_2\}$ 
which allows us to succinctly express constraints like $\stm{a-lw[i]} \in
\{\sigma_1, \sigma_2\}$. 
Here, $\sigma_1$ and $\sigma_2$ are literals (referring to the corresponding
statements). All literals are pairwise different. 
Thus, these constraints allow us to infer constraints like $\stm{a-lw[i]} \neq
\sigma_3$.
Such constraints are crucial to infer independence of statements.

%Since all literals (or statements represented by them) are pairwise distinct,
%these constraints allow us to conclude constraints like $\stm{a-lw} \neq
%\sigma_3$, which are the basis of separation.
%Furthermore, the domain supports constant maps like the ones we use in our
%instrumentation.

%\TODO{muss man was zu den extensions der map equalty domain sagen?? oder ueber
%haupt zur domain??}

%Because abstract interpretation always computes an overapproximation of the
%reachable states of a program, we obtain a relation $\lastwritessharp$ that
%soundly overapproximates $\lastwrites$.
%In particular, this means that for every read position $(\stm{a}, \stm{i}, \loc)$,
%$\lastwritessharp(\stm{a}, \stm{i}, \loc)$ contains a superset of what
%$\lastwrites(\stm{a}, \stm{i}, \loc)$ contains.

From now on, we use \lastwritessharp to refer to the overapproximation of the
relation \lastwrites computed by applying the above-described static analysis to
the instrumented program \progpre.  
The static analysis always computes an overapproximation of the set of reachable
states of \progpre. Thus, the relation \lastwritessharp is an overapproximation of
the Last Writes relation \lastwrites.
We state the following remark for later reference (in Lemma
\ref{chap_heapsep_lemma_bisim_Ppre_Pprime}).
%\ref{chap_heapsep_theorem_bisim_P_Pprime}).

\begin{remark}
\label{chap_heapsep_rem_lwsharp_overapp}
 The relation $\lastwritessharp$ is an overapproximation of the Last Write
  relation $\lastwrites$, i.e., 
 $$\lastwritessharp \supseteq \lastwrites.$$
 % for all map variables
 % \stm{a}, index variables \stm{i} and locations \loc, \lastwritessharp returns
 % a superset of what \lastwrites returns. Formally:
 % $$\forall \stm{a} \in \mapvars \ldotp \forall \stm{i} \in \basevars \forall \loc \in \locations
 %   \ldotp \lastwritessharp(\stm{a}, \stm{i}, \loc) \supseteq \lastwrites(\stm{a},
 %    \stm{i}, \loc)$$
\end{remark}

 \section{Program Transformation}
  \label{chap_heapsep_sec_transformation}
  In this section we introduce the program transformation that transforms the
program $P$, given the relation \lastwritessharp, which approximates the Last
Write relation \lastwrites of program $P$.

%In this section, we will describe the program transformation that separates map
%variables in the input program according to the relation \lastwritessharp that
%we presented in the previous section.

%that exploits
%the relation \lastwritessharp presented in the previous section to construct a
%program that has essentially the same bevaviour but is easier to verify because
%map variables have been separated according to the information .
%Given the analysis result $\lastwritessharp$ for a program with maps $P$, the
%transformation constructs a program $P'$ such that the maps in $P$ are divided
%into several maps, according to $\lastwritessharp$. 
%We will show that there is a bisimulation between $P$ and $P'$.

\subsection{Computing a Partition of the Map Write Statements}

First, we define the relation $R \subseteq \writestatements \times
\writestatements$ that relates all write statement that map
influence the same read statement. 
Two write statements $\somewrite$ and $\somewrite'$ are related by $R$ if there
exists a read statement \someread such that the relation \lastwritessharp
relates both \somewrite to \someread and
$\somewrite'$ to \someread. Formally:
$$R \define \{ (\somewrite, \somewrite') \mid 
\exists \,\someread \in \readstatements \ldotp
\lastwritessharp(\somewrite,\someread) \land 
\lastwritessharp(\somewrite,\someread) \}$$

Based on the relation $R$, we define the relation $r \subseteq \writestatements \times
\writestatements$ as the smallest equivalence relation that contains the
relation $R$.
This equivalence relation $r$ induces a partition over the set \writestatements,
i.e., a set $\writeblocks \subseteq 2^{\writestatements}$ of subsets of the set
\writestatements such that the disjoint union of the subsets is identical to the
original set \writestatements.
%We call the partition that is induced by $r$, the set $\writeblocks$.
Thus, the set \writeblocks consists of disjoint subsets $\{\writeblock_1, \ldots,
\writeblock_n\}$ of the set of all write statements \writestatements. 
The partition \writeblocks has the property that for every two blocks
$\writeblock_1$ and $\writeblock_2$ in \writeblocks, we know that 
if we take one write statement \somewrite from $\writeblock_1$ and another write
statement $\somewrite'$ from $\writeblock_2$, then \somewrite and $\somewrite'$
are independent in the sense that they never have an influence on the same read
statement.

For technical reasons, we add a the singleton consisting only of the symbol \nonwrite  to
\writeblocks. Its use will become clear in the next subsection.

\subsection{Program Transformation}

%As input our transformation takes a program 
% $P = (\locations, \statements, \initloc)$ and the analysis result 
% an the partition of the map write statements $\writeblocks \subseteq
% 2^{\writestatements}$ as constructed in the previous subsection.
 %$\lastwritessharp  \subseteq \writestatements \times \readstatements$.
 %$\lastwritessharp  \colon \mapvars \times \basevars \times \locations \to 2^\Sigma$.

%The transformed program will have each map variable $\stm{a} \in \mapvars$
%replaced by one or more versions of \stm{a}, each corresponding to one block of
%the partition \writeblocks defined above.

We introduce a map variable \stm{a\_\writeblock} for each $\writeblock \in
\writeblocks$. 
If for example the write statements \stm{a[i]:=x} and \stm{a[j]:=y} appear in
different blocks $\writeblock_1$ and $\writeblock_2$, then we will replace the
map variable \stm{a} with two different variables $\stm{a\_{$\writeblock_1$}}$ and
$\stm{a\_{$\writeblock_2$}}$ in these statements accordingly.
(There is a subtle point here regarding the fact that \writeblock is a
mathematical object while a variable name consists of characters which we
neglect here.)

We use the notation $\preimage{\lastwritessharp}{\someread}$ to denote the
preimage of \lastwritessharp with respect to some read statement $\someread\in
\readstatements$, i.e.,  %
$$\preimage{\lastwritessharp}{\someread} \define \{ \somewrite
\mid (\somewrite, \someread) \in \lastwritessharp \}.$$

%(For simplicity we assume that the variable name
%\stm{a\_\writeblock} contains the mathematical object \writeblock here, to be
%totally correct we would have to define a mapping from pairs of map variables
%and blocks to fresh variables, which would clutter our notation unnecessarily.)

The transformation updates the statements of program $P$ using the 
transformation $\stmtransformer\colon \allstatements \to \allstatements$ as
described in the following.  
%The control flow graph's structure is left unchanged.
%While only the commands are changed, the transformation still depends on their
%position in the program, so the transformer has the signature
% $\stmtransformer\colon \allstatements \to \allstatements$.  
The transformation result $\stmtransformer(\sigma)$ depends on the statement
type of $\sigma$.
If $\sigma$ writes to map variable \stm{a}, it is transformed to 
a statement that does the same update to map variable \stm{a\_\writeblock},
i.e., to the map variable corresponding to the block in the partition
$\writeblock \in \writeblocks$ that contains $\sigma$.
If $\sigma$ reads from a map variable \stm{a}, there are two cases. 
Either \lastwritessharp at the read location yields the empty set. 
This means that it is guaranteed that the read position has never
been written to in any execution that reaches $\sigma$. 
In this case, $\sigma$ is transformed to a read from the map variable
\stm{a\_{$\{\nonwrite\}$}} instead of \stm{a}.
Otherwise, by construction of the partition \writeblocks, \lastwrites must yield 
a set that falls completely into a block \writeblock in the partition $\writeblocks$.
In that case, $\sigma$ is transformed to a read from the map variable
\stm{a\_\writeblock} instead of \stm{a}.
If $\sigma$ assigns a map variable \stm{a} to a map variable \stm{b}, it is
transformed to a series of assignments that assign for each block in the
partition $\writeblock \in \writeblocks$ the variable \stm{a\_\writeblock} to
the variable \stm{b\_\writeblock}.
A havoc to a map variable \stm{a} is translated to havoc on all variables
$\stm{a\_\writeblock}$ for every block \writeblock in the partition
\writeblocks, followed by an assume statement that ensures
that all maps \stm{a\_\writeblock} have been set to the same value.
In all other cases, the transformation leaves $\sigma$ unchanged.
Formally:
\begin{align*}
  \stmtransformer((\loc, \stm{a[i]:=x}, \loc')) \define & (\loc, \stm{a\_$W$[i]:=x}, \loc') 
    \text{ where } (\loc, \stm{a[i]:=x}, \loc') \in \writeblock \\
  \stmtransformer(\stmedge{\loc}{\stm{x:=a[i]}}{\loc'}) \define &
       \stmedge{\loc}{\stm{x:=a\_$\{\nonwrite\}$[i]}}{\loc'} \\
       & \text{ if } 
	\preimage{\lastwritessharp}{\stmedge{\loc}{\stm{x:=a[i]}}{\loc'}} = \emptyset\\
  \stmtransformer(\stmedge{\loc}{\stm{x:=a[i]}}{\loc'}) \define & 
     \stmedge{\loc}{\stm{x:=a\_$W$[i]}}{\loc'} \\
      & \text{ if } 
	\preimage{\lastwritessharp}{\stmedge{\loc}{\stm{x:=a[i]}}{\loc'}} \neq \emptyset\\
     &  \text{ and }  
	\preimage{\lastwritessharp}{\stmedge{\loc}{\stm{x:=a[i]}}{\loc'}}
	\subseteq \writeblock\\
  \stmtransformer(\stmedge{\loc}{\stm{b:=a}}{\loc'}) \define & \stmedge{\loc}{\stm{b\_$W_1$:=a\_$W_1$;
  ...; b\_$W_n$:=a\_$W_n$}}{\loc'}  \\
     & \text{ where } \writeblocks = \{\writeblock_1, \ldots, \writeblock_n \} \\ 
  \stmtransformer(\sigma) \define & \sigma \text{ if } \sigma \text{ matches none of the above cases}
\end{align*}

We construct the transformed program $P'$ by replacing all statements $\sigma$
in $P$ by their transformed version $\stmtransformer(\sigma)$.
Formally:
 $$P' \define (\locations, \{ \stmtransformer(\sigma) \mid \sigma \in \Sigma \},
 \initloc)$$
%\begin{align*}
%  P' \define & (\Sigma', \locations, \ell_0, \delta')  &
%  \text{where}  & & \Sigma' \define & \{\stmtransformer(\sigma) \mid \sigma \in
%  \Sigma\} \\
%  & & \text{and} & & \delta' \define &
%    \{ (\ell, \stmtransformer(\sigma), \ell') \mid (\ell, \sigma, \ell') \in
%    \delta \}
%\end{align*}

\subsection{Correctness of the Transformation}
In this subsection, we show that the transformation is correct, i.e., that the
program $P$ and the transformed program $P'$ are bisimulation-equivalent.
Given Lemmma~\ref{chap_heapsep_lemma_bisim_P_Ppre}, it is sufficient to prove
the following Lemma.

As an aside: it does not seem obvious to us how to give a bisimulation between
the programs $P$ and $P'$ directly.

%We show that the transformed program $P'$ from the previous subsection is
%bisimulation-equivalent to the input program $P$.
%We do this in two steps: First we prove a Lemma stating that the instrumented
%program \progpre from Section \ref{chap_heapsep_sec_preprocessing} is
%bisimulation-equivalent to $P'$. Then we combine this lemma with
%Lemma~\ref{chap_heapsep_lemma_bisim_P_Ppre} to conclude bisimulation-equivalence
%between $P$ and $P'$.

\begin{lemma} %[Bisimulation]
  \label{chap_heapsep_lemma_bisim_Ppre_Pprime}
  %There is a bisimulation between $\progpre$ and $P'$.
  The programs $\progpre$ and $P'$ are bisimulation-equivalent.
%
%Let $P$ be a program. Let $P' = \tau(P)$ be the program that is the result of
%applying our transformation to $P$.
%
%Then there exists a bisimulation $\varphi \subseteq \states(P) \times
%\states(P')$ between $P$ and $P'$.
\end{lemma}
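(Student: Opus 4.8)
The plan is to exhibit an explicit bisimulation $\bisimpre$ between $\progpre$ and $P'$, matching a move of $\progpre$ along $\rdtransformer(\sigma)$ with the move of $P'$ along $\stmtransformer(\sigma)$ for each $\sigma\in\Sigma$, and vice versa. Both programs are obtained from $P$ statement-by-statement, so they share the control-flow skeleton of $P$; since several images are short straight-line blocks of commands (a write becomes two assignments, $\stm{b:=a}$ becomes $|\writeblocks|$ assignments, $\stm{havoc a}$ a block of havocs followed by an $\stm{assume}$ forcing the copies to a common value), I treat each image as a single macro-step and relate only the configurations at its endpoints, viewing the intermediate configurations as internal. The relation I would use: $s\bisimpre s'$ iff (i) $s$ and $s'$ agree on $\stm{pc}$ and on every base variable; and (ii) for every map variable $\stm{a}$ and every index $j$, putting $w=s(\stm{a-lw})(j)$: if $w=\nonwrite$ then $s'(\stm{a\_$W$})(j)=s(\stm{a})(j)$ for \emph{every} block $W\in\writeblocks$ (including the singleton $\{\nonwrite\}$), and if $w$ is a write statement lying in the block $W_0\in\writeblocks$ then $s'(\stm{a\_$W_0$})(j)=s(\stm{a})(j)$. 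Intuitively: the copy $\stm{a\_$W$}$ agrees with $\stm{a}$ exactly on the cells whose responsible write, as recorded by $\stm{a-lw}$ (cf.\ Proposition~\ref{chap_heapsep_prop_lw_vs_a-lw}), lies in $W$, and on cells that have never been written \emph{all} copies still agree with $\stm{a}$.

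First I would verify that $\bisimpre$ is preserved by every statement type, making the same nondeterministic choices (from $\stm{havoc}$ and $\stm{assume *}$) on both sides. For $\stm{assume}$, base assignments and $\stm{havoc x}$ there is nothing to do, as maps are untouched. For a write $\somewrite=\stmedge{\loc}{\stm{a[i]:=x}}{\loc'}$ in block $W$: $\rdtransformer$ additionally records $\somewrite$ in $\stm{a-lw}$ at the written index and $\stmtransformer$ performs the same update on $\stm{a\_$W$}$; at the written index both $\stm{a}$ and $\stm{a\_$W$}$ receive $\stm{x}$ while $\stm{a-lw}$ now points into $W$, and all other cells and all other copies are untouched, so (ii) is re-established. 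For $\stm{b:=a}$, $\progpre$ copies $\stm{a-lw}$ into $\stm{b-lw}$ and $P'$ copies each $\stm{a\_$W$}$ into $\stm{b\_$W$}$, so the invariant for $\stm{b}$ afterwards follows blockwise from the invariant for $\stm{a}$ beforehand. For $\stm{havoc a}$ and $\stm{a:=\constarray{\lit}}$, $\progpre$ also overwrites $\stm{a-lw}$ with the constant $\nonwrite$-map, so after the step (ii) requires \emph{all} copies of $\stm{a}$ to coincide with $\stm{a}$; accordingly $P'$ havocs all the $\stm{a\_$W$}$ and forces them to a common value (resp.\ sets them all to $\constarray{\lit}$), and choosing that value to be the one taken by $\stm{a}$ in $\progpre$ re-establishes (ii).

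The genuinely interesting case is a map read $\someread=\stmedge{\loc}{\stm{x:=a[i]}}{\loc'}$, which $\stmtransformer$ reroutes to a read of $\stm{a\_$W$}$ with $W\supseteq\preimage{\lastwritessharp}{\someread}$, or to a read of $\stm{a\_$\{\nonwrite\}$}$ when $\preimage{\lastwritessharp}{\someread}=\emptyset$. I would argue that the two sides give $\stm{x}$ the same value. Fix a reachable state $s$ of $\progpre$ with $s(\stm{pc})=\loc$, let $j=s(\stm{i})$ and $w=s(\stm{a-lw})(j)$. If $w$ is a write statement, then Proposition~\ref{chap_heapsep_prop_lw_vs_a-lw} gives $(w,\someread)\in\lastwrites$, hence $w\in\preimage{\lastwritessharp}{\someread}\subseteq W$ by Remark~\ref{chap_heapsep_rem_lwsharp_overapp}, so clause (ii) yields $s'(\stm{a\_$W$})(j)=s(\stm{a})(j)$. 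If instead $w=\nonwrite$ --- which, by the same Proposition and Remark, is the \emph{only} possibility when $\preimage{\lastwritessharp}{\someread}=\emptyset$ --- then clause (ii) says that \emph{every} copy of $\stm{a}$, in particular $\stm{a\_$W$}$ (resp.\ $\stm{a\_$\{\nonwrite\}$}$), still agrees with $\stm{a}$ at $j$. Either way the read produces the same value, the update of $\stm{x}$ is identical, and the successor states stay related.

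I expect the main obstacle to be identifying the invariant, and in particular realizing that clause (ii) must also pin down the never-written cells (the ``$w=\nonwrite$'' sub-case): this is precisely what the read case needs, because a read is rerouted to $\stm{a\_$W$}$ on the strength of the \emph{over-approximation} $\lastwritessharp$, which may claim that some write reaches $\someread$ even in executions where the read cell has never been written. Proposition~\ref{chap_heapsep_prop_lw_vs_a-lw} is the single place where reachability --- rather than a local, per-statement computation --- enters: it turns ``$s$ is a reachable state of $\progpre$ at $\someread$'' into a membership fact about $\preimage{\lastwritessharp}{\someread}$. Finally I would dispatch the initial configurations; these need a little care, since the $\initloc$-outgoing statements of $\progpre$ reset all \stm{lw}-maps to the constant $\nonwrite$-map, so one relates an initial state of $\progpre$ only to those initial states of $P'$ in which the split copies already agree with the original maps, an agreement that can be imposed by a matching initialization at $\initloc$ in $P'$. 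Together with the step-matching this makes $\bisimpre$ a bisimulation between $\progpre$ and $P'$, which is the claim of the lemma; combined with Lemma~\ref{chap_heapsep_lemma_bisim_P_Ppre} and transitivity of bisimulation-equivalence it yields Theorem~\ref{chap_heapsep_theorem_bisim_P_Pprime}.
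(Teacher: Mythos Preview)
Your approach is essentially the paper's: the bisimulation relation you propose coincides with the paper's conditions (1), (2a), and (2b), and your case analysis on the shape of $\sigma$ mirrors the paper's, with the map-read case handled via Proposition~\ref{chap_heapsep_prop_lw_vs_a-lw} and Remark~\ref{chap_heapsep_rem_lwsharp_overapp} in both. The only notable differences are cosmetic sharpenings: you quantify clause~(ii) over all semantic indices~$j$ rather than over base variables (which sidesteps a small wrinkle when a base variable is reassigned), and you explicitly flag the $\initloc$-initialization issue that the paper's proof leaves untreated.
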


\begin{proof}%[Bisimulation]
We define a bisimulation relation $\bisim$ between $\progpre$ and $P'$ as
follows.

\smallskip
The states $s \in \states_P$ and $t \in \states_{P'}$ are bisimilar, i.e.,
$s\bisim t$, iff
\begin{align*}
   & \forall x \in \variables_{base} \ldotp & \eval{s}{x} & = \eval{t}{x} & (1) \\
    & \text{ and } \\
    & \forall \stm{a} \in \variables_{map} \ldotp \forall \stm{i} \in \variables_{base} \ldotp \\
      & \qquad  (\eval{s}{\stm{a-lw[i]}} = \nonwrite \implies 
      & \kern-2em \forall \writeblock \in \writeblocks \ldotp  
	 \eval{s}{\stm{a[i]}} & = \eval{t}{\stm{a\_$W$[i]}}) & (2a) \\
     & \quad \land (\exists \writeblock  \in \writeblocks \ldotp 
        \eval{s}{\stm{a-lw[i]}} \in \writeblock \implies 
      & \eval{s}{\stm{a[i]}} & = \eval{t}{\stm{a\_$W$[i]}}) & (2b)
\end{align*}

We show, that $\bisim$ is a bisimulation.
Pick $s, t$ such that $s \bisim t$ (We call this the induction hypothesis, I.H.).
Pick $\sigma$ in $\statements_P$ (which corresponds to picking
$\rdtransformer(\sigma)$ and $\stmtransformer(\sigma)$ as well).

We make a case distinction on which statement type $\sigma$ falls into.

\paragraph{Case $\sigma$ is an assignment:}
Let $\{s'\} \in \post{\{s\}}{\rdtransformer(\sigma)}$ and 
let $\{t'\} = \post{\{t\}}{\stmtransformer(\sigma)}$.

First, we consider the conditions (1), (2a), and (2b) with respect to variables
\stm{x}, \stm{a}, and \stm{i} that are not updated by $\sigma$ when $\sigma$ is
deterministic.  
For all three conditions, the reasoning is simple:
By I.H. the condition holds with respect to $s$ and $t$.
Neither $\rdtransformer(\sigma)$ nor $\stmtransformer(\sigma)$ modify \stm{x},
\stm{a} or \stm{i} as they occur in the conditions, and $\rdtransformer(\sigma)$
does not modify \stm{a-lw}. Thus the conditions directly carry over from $s$
and $t$ to $s'$ and $t'$.

In order to prove the conditions for $s'$ and $t'$ with respect to to variables
that are updated by $\sigma$, we make a further case distinction on which type of
assignment $\sigma$ is (and analogously by $\rdtransformer(\sigma)$ and
$\stmtransformer(\sigma)$).
\begin{itemize}
  \item Case $\sigma = \stm{a[i]:=x}$:
    \stm{a} is updated only at position $\eval{s'}{\stm{i}}$; for the other
    positions, the same reasoning as above is applicable.
    We know $\eval{s'}{\stm{a-lw[i]}} = \sigma$ and $\sigma \neq \nonwrite$.
    Thus the antecedent of condition (2a) cannot be fulfilled in $s'$.
    Let $\writeblock \in \writeblocks$ be the block that contains $\sigma$.
    Remember $\stmtransformer(\sigma) = \stm{a\_$W$[i]:=x}$ and
    $\rdtransformer(\sigma) = \stm{a[i]:=x; ...}$. 
    Thus $\eval{t'}{\stm{a\_$W$[i]}} = \eval{t}{\stm{x}} = \eval{s}{\stm{x}} =
    \eval{s'}{\stm{a[i]}}$, which means condition (2b) is fulfilled. 
  \item Case $\sigma = \stm{x:=a[i]}$: Then $\stmtransformer(\sigma) =
    \stm{x:=a\_$\writeblock$[i]}$ for some $\writeblock \in \writeblocks$.
    In order to show $\eval{s'}{\stm{x}} = \eval{t'}{\stm{x}}$, we need to show 
    $\eval{s}{\stm{a[i]}} = \eval{t}{\stm{a\_$\writeblock$[i]}}$.

    First, if $\writeblock = \{\nonwrite\}$, 
    by construction of $\stmtransformer$, we have $\lastwritessharp(\stm{a},
    \stm{i}, s(\stm{pc})) = \{\nonwrite\}$. 
    Thus, by Proposition \ref{chap_heapsep_prop_lw_vs_a-lw} and Remark
    \ref{chap_heapsep_rem_lwsharp_overapp}, we have
    $\eval{s}{\stm{a-lw[i]}} = \nonwrite$. 
    Thus, by condition (2a) in I.H. we get 
    $\eval{s}{\stm{a[i]}} = \eval{t}{\stm{a\_$W$[i]}}$. 
    
    Second, if $\writeblock \neq \{\nonwrite\}$,
    by construction of $\stmtransformer$, we have $\lastwritessharp(\stm{a},
    \stm{i}, s(\stm{pc})) \setminus \{\nonwrite\} \subseteq \writeblock$. 
    Thus, by Proposition \ref{chap_heapsep_prop_lw_vs_a-lw} and Remark
    \ref{chap_heapsep_rem_lwsharp_overapp}, we have
    $\eval{s}{\stm{a-lw[i]}} \in \writeblock$. 
    Thus, by condition (2b) in I.H. we get 
    $\eval{s}{\stm{a[i]}} = \eval{t}{\stm{a\_$W$[i]}}$. 
  \item Case $\sigma = \stm{b:=a}$:
    We must show conditions (2a) and (2b) holds for $s'$ and $t'$ for variable
    \stm{b} and \stm{b-lw}.
    We already showed this for \stm{a} and \stm{a-lw} above (because \stm{a} is
    updated by $\rdtransformer(\sigma)$/$\stmtransformer(\sigma)$).
    Our proof goal follows directly from the fact that $s'(\stm{a}) =
    s'(\stm{b})$ and $s'(\stm{a-lw}) = s'(\stm{b-lw})$ and for all
    $\writeblock \in \writeblocks$, $t'(\stm{a\_\writeblock}) =
    t'(\stm{b\_\writeblock})$ hold, which is ensured by the assignments in the
    statements $\rdtransformer(\sigma)$ and $\stmtransformer(\sigma)$.
  \item Case  $\sigma = \stm{x:=e}$: where $e$ is not a map read.
    Then, we know
    $\rdtransformer(\sigma) = \stmtransformer(\sigma) = \sigma$.
    By I.H., condition (1), $\eval{s'}{\stm{e}} = \eval{t'}{\stm{e}}$ holds,
    because \stm{e} is a base variable or a literal.
    Our goal $\eval{s'}{\stm{x}} = \eval{t'}{\stm{x}}$ follows directly.
\end{itemize} %

\paragraph{Case $\sigma$ is a havoc statement:}
\begin{itemize}
  \item Case $\sigma = \stm{havoc a}$:
    We show show the simulation directions separately.

    First let $s' \in \post{s'}{\rdtransformer(\sigma)}$.
    We need to show existence of an appropriate $t' \in \post{t'}{\stmtransformer(\sigma)}$.
    %(In contrast to the other cases, $\post{t'}{\stmtransformer(\sigma)}$ is not
    %a singleton here.)  
    Given $\eval{s'}{\stm{a}}$, pick $\eval{t'}{\stm{a\_W}}$ for all $W$s
    identical to that. (Clearly, this state $t'$ is not blocked by the assume
    statement in $\stmtransformer(\sigma)$.)

    For the other simulation direction let $t' \in \post{t'}{\stmtransformer(\sigma)}$.
    We need to show existence of an appropriate $s' \in \post{s'}{\rdtransformer(\sigma)}$.
    We know that for all $\writeblock, \writeblock'$, 
    $\eval{t'}{a_\writeblock} = \eval{t'}{a_{\writeblock'}}$ 
    holds (ensured by the assume statement in
    $\stmtransformer(\sigma)$).
    Pick $\eval{s'}{\stm{a}}$ such that it equals all the $\eval{t'}{\stm{a\_W}}$.

  \item Case $\sigma = \stm{havoc x}$:
    We can clearly choose the appropriate $s'$ or $t'$ such that condition (1) is met.
\end{itemize} %

\paragraph{Case $\sigma = \stmem{assume \boolexp}$:}
   Remember we did not allow the use of map variables in assume statements, so
   $\rdtransformer(\sigma) = \stmtransformer(\sigma) = \sigma$.
    Because of I.H., condition (1), $s$ and $t$ agree on all base variables.
    Thus $\eval{s}{\boolexp} = \eval{t}{\boolexp}$.
    Thus whenever an $s'$ is not blocked by $\rdtransformer(\sigma)$, it is not
    blocked by $\stmtransformer(\sigma)$ and vice versa.
\qed

\end{proof}

\begin{theorem}[Bisimulation]
  \label{chap_heapsep_theorem_bisim_P_Pprime}
  %There is a bisimulation between $P$ and $P'$.
  $P$ and $P'$ are bisimulation-equivalent.
\end{theorem}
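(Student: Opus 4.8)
The plan is to obtain the theorem immediately from the two lemmas already established, using the fact that bisimulation-equivalence is transitive. Lemma~\ref{chap_heapsep_lemma_bisim_P_Ppre} provides a bisimulation between $P$ and the instrumented program $\progpre$, and Lemma~\ref{chap_heapsep_lemma_bisim_Ppre_Pprime} provides a bisimulation between $\progpre$ and the transformed program $P'$. Since the relational composition of two bisimulations is again a bisimulation, composing these two relations yields a bisimulation between $P$ and $P'$, which is exactly what the theorem asserts. So the body of the proof is a one-line appeal to transitivity.

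The only point that deserves a word of justification is transitivity itself, i.e., that if $B_1 \subseteq \states_P \times \states_{\progpre}$ and $B_2 \subseteq \states_{\progpre} \times \states_{P'}$ are bisimulations, then $B_2 \circ B_1 = \{ (s,t) \mid \exists u \ldotp (s,u) \in B_1 \land (u,t) \in B_2 \}$ is a bisimulation. This is standard: given $(s,t) \in B_2 \circ B_1$ witnessed by some $u$, and a step $s' \in \post{\{s\}}{\sigma}$ of $P$, the $B_1$-simulation yields a matching step to some $u'$ with $(s',u') \in B_1$, and the $B_2$-simulation then yields a matching step to some $t'$ with $(u',t') \in B_2$, so $(s',t') \in B_2 \circ B_1$; the other simulation direction is symmetric, and the initial states are related because $B_1$ and $B_2$ each relate initial states (and the intermediate witness can be chosen to be an initial state of $\progpre$). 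I expect no real obstacle here: the substantive content of the transformation's correctness has already been discharged in the proof of Lemma~\ref{chap_heapsep_lemma_bisim_Ppre_Pprime}, and this theorem is just the chaining-together of the two bisimulation results, as anticipated by the remark preceding Lemma~\ref{chap_heapsep_lemma_bisim_Ppre_Pprime} that a direct bisimulation between $P$ and $P'$ is not obvious.
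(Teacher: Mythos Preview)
Your proposal is correct and matches the paper's own proof, which is a one-line appeal to the transitivity of bisimulation-equivalence using Lemmas~\ref{chap_heapsep_lemma_bisim_P_Ppre} and~\ref{chap_heapsep_lemma_bisim_Ppre_Pprime}. Your additional paragraph spelling out why the composition of two bisimulations is again a bisimulation is a reasonable elaboration that the paper omits as standard.
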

\begin{proof}
  This follows by transitivity of bisimulation-equivalence from Lemmas
  \ref{chap_heapsep_lemma_bisim_P_Ppre}
  and
  \ref{chap_heapsep_lemma_bisim_Ppre_Pprime}. \qed
\end{proof}

 \section{Implementation in Ultimate}
  The purpose of this paper is to provide formal foundations of a program
transformation that makes independence of groups of map accessing statements
explicit and to prove it correct.
However, we find it important to explant that the approach extends to a full
fledged intermediate language.

%Up to this point, we used a simplified programming language to keep our
%presentation simple. However, our technique works for a larger class of
%programs.  In this section, we highlight how our technique can be extended for
%some key aspects of the larger programming language that we implemented it on.

We implemented our program transformation in the \ultimate program analysis 
framework\footnote{\url{https://github.com/ultimate-pa/ultimate}}. 
The intermediate representation we support is the most expressive one used by
\ultimate, namely the so-called \emph{interprocedural control flow graph}
(short: ICFG).
ICFGs are control flow graphs whose edges are labeled with 
ith transition formulas. Transition formulas are arbitrary logical
formulas over some background theory that contain an in- and an out-version for
each program variable.
Furthermore, ICFGs allow dedicated edges for procedure calls and returns.
In the following we highlight the most important features that the programming
language used so far does not have and explain what is necessary to support
them.

\paragraph{Multidimensional Maps}
In order to support maps of higher dimensions, we need to slightly adapt the
relation \lastwrites and the corresponding analysis.
On a technical level this is done by having not one but several \stm{lw}-maps
 for each map variable in the original program.
For an $n$-dimensional map variable \stm{a} we would introduce $n$ \stm{lw}-maps
\stm{a-lw-1} to \stm{a-lw-n} where \stm{a-lw-1} is one-dimensional
\stm{a-lw-2} is two-dimensional and so forth.

\paragraph{Transition Formulas}
In transition formulas, the distinction between assume
statements and assignments is not immediately apparent.
For example, given a program variable \stm{a}, the transition formula 
$\stm{a'} = 1$ would correspond to the assignment \stm{a:=1}, while the
transition formula $\stm{a}=1 \land a'=a$ would correspond to the assume
statement \stm{assume a==1}.
In order to infer, how our instrumentation needs to be done, we need to compute,
which which variables are unconstrained in a given formula. Those have to be 
treated like variables subject to a havoc statements are treated.
%No two constrained variables may occur together in an equation, because that
%would correspond to the problematic case of an assume over two map variables.
%For unconstrained map variables, the corresponding \stm{lw}-variables must be
%assigned the constant map $\stmconstarray{\nonwrite}$ as described for a havoc
%statement.

\paragraph{Procedures}
In order to support procedures, two features are relevant: 
Map-valued parameters must be passed between procedures, and it must be possible
to compute procedure summaries that describe the effect of a procedure on
global map variables (in fact having one of these features would be enough in
terms of expressiveness, but \ultimate supports both). 
Both of these features are enabled by our support for (by-value) assignments
between maps.

%which we support through our support of
%(by-value) map assignments. Furthermore, in order to compute procedure summa

%the control flow graph in \ultimate contains
%special edges that represent transitions between procedures (calls and returns).
%Here 
%
%Usually, the tools in \ultimate (whether analysis or verification tool) attempt
%to compute summaries for the procedures. 
%Computation of summaries, as well as parameter passing during calls and returns
%is essentially enabled by our support for by-value assignments of map-variables.

%In particular, the language we support in our implementation has those
%additional featueres:
%\begin{itemize}
%  \item multidimensional maps
%  \item arbitrary transition formulas (lbe, ...), only exception: ``assumes'' on
%    maps, i.e., if maps are equated, then one must be unconstrained (``out'')
%\end{itemize}
%
%Multidimensional maps.
%\begin{itemize}
%  \item the domain supports them
%  \item loc-maps: one for each dimension
%  \item RD preprocessing ghost code: duplicate map equations, one version with
%loc maps
%  \item ..
%\end{itemize}

%Map assignments
%\begin{itemize}
%  \item adapt RD analysis
%  \item (use ``map groups'' ... not: this should work out automatically when RD
%is right.. because now stores to a might reach to reads of b, too)
%\end{itemize}

%Procedural programs \TODO{....}

%\subsection{Implentation}
%...

 %\section{Evaluation}
\section{Experiments on a Scalable Benchmark Suite}
  
%With our experimental evaluation we support two claims:
%First, that our presentation so far leads to a practical program transformation
%that can be incorporated in the toolchain of a program verifier. Second, that
%our property, although it is more complex than the standard aliasing property, 
%can give a performance boost in overall verification time of a program verifier.

%\paragraph{Benchmarks}
The thorough experimentation needed to establish whether the approach can be
made applicable to classes of practical benchmarks (or, to what
classes) is not in the scope of this paper. In this section, we will
only investigate whether the approach is applicable in principle.
That is, we will use a benchmark
suite which is specifically tailored to condensate the case split
explosion problem.  This helps us to factor out all aspects in
automatic program verification that are orthogonal
to our problem.

We obtain the bechmark suite by starting with the example program from
Section~\ref{chap_heapsep_sec_example}.  The example program
manipulates the map variable \emph{mem} on the two \emph{index} variables \texttt{p}
and \texttt{q}.  We obtain a new program by adding another two
variables and adding the corresponding statements which manipulate
the map variable \emph{mem} on two new variables in the same way as
the existing statements do for \texttt{p}
resp.\ \texttt{q}.   We can iterate the process and thus obtain a
scalable benchmark suite whose  programs have $2$, $4$, $6$, \ldots\
index variables.

%  the toolchain with the program transformation
% scales very well in the size of the program (whereas the toolchain
% without the program transformation quickly falls into the case
% explosion problem and runs out of time or space).  

% Our benchmarks are scaled-up variants of the
% example from Section~\ref{chap_heapsep_sec_example}. The scaling-up is done by
% adding more map index variables to the main procedure in the same manner as
% \stm{p} and \stm{q}. Half of the variables are decremented nondeterministically
% often in the loop, the other half are incremented nondeterministically often in
% the same fashion as in the code given in Section~\ref{chap_heapsep_sec_example}.

% This enforces that more and more non-interference conditions between map updates
% need to be checked.  These examples are of course rather artificial.  Their
% purpose is to show the effects of non-interference checks in isolation from
% other aspects of the program verifier that would come into play for 
% practical programs.

\paragraph{Setup}

We ran our experiments on a dedicated benchmarking system, each benchmark task
was limited to 2 CPU cores at 2.4GHz and 20 Gigabytes of RAM.
We ran two toolchains and took three measurements.
One toolchain, called ``\automizer without'', is the standard verification toolchain of the
program verifier \ultimate Automizer.
The toolchain computes an ICFG from the input program and then run's Automizer's
verification algorithm on the ICFG.
  The second toolchain, called
``\automizer with'',
%\heapsepautomizer, 
applies our transformation after computing the interprocedural
control flow graph and before running Automizer's verification algorithm.
A third kind of measurements, denoted 
``\automizer after'',
%\heapsepautomizeronly, 
are the timings
of only the verification algorithm  in the toolchain ``\automizer with'', i.e.,
how long the verification of the transformed program takes.
%how long verification takes after the program transformation has been computed.

\paragraph{Results}
\begin{figure}[t]
\begin{tikzpicture}
  \begin{axis}[legend pos= north west, xlabel={\# index variables in program},
    ylabel={runtime (s)}, width=\textwidth, height=5cm]
    \addplot[
        scatter,only marks,scatter src=explicit symbolic,
        scatter/classes={
            %a={mark=square*,blue},
            a={mark=x,red,thick},
            b={mark=o,blue,thick},
	    c={mark=square,violet,thick},
	    d={mark=text,red,thick,}
	    %d={mark=asterisk,red,thick}
            %b={mark=*,draw=red,fill=red}
            %b={mark=o,draw=red,fill=red}
        },
	text mark={\scriptsize\textbf (TO)}
    ]
    table[x=x,y=y,meta=label]{
        x    y    label

	%via debug gui
%	02                                0.40883   a  
%        04                                1.13626   a
%        06                                3.80608   a
%        %08                                7.96266   a
%	08                                9.28338   a
%	10                               21.88768   a
%	12                               61.66424   a
%	14                              140.82648   a
%	16                              252.63223   a
%	%16                              237.78742   a
%        
%        02            .36664                        b
%        04           3.62069                        b
%        06          23.69043                        b
%        08         150.66903                        b
%
%	02                                0.13046   c  
%        04                                0.22438   c
%        06                                0.37774   c
%        %08                                0.54273   c
%	08                                0.47925   c
%	10                                0.64725   c
%	12                                0.87054   c
%	14                                1.33836   c
%	%16                                1.80587   c
%        16                                1.61202   c

	% run on struebli
 02     5.44   a
 04    26.9    a
 06   103      a
 08   412      a

 10  1800      d
 12  1800      d
 14  1800      d
 16  1800      d
 18  1800      d
 20  1800      d

 02     4.7    b
 04     6.92   b
 06    12.3    b
 08    21.5    b
 10    38.0    b
 12    70.6    b
 14   145      b
 16   233      b
 18   390      b
 20   677      b

 02     0.93   c
 04     1.69   c
 06     2.24   c
 08     4.0    c
 10     3.30   c
 12     4.38   c
 14     5.74   c
 16     6.1    c
 18     7.28   c
 20     8.75   c
    };
    %\legend{\automizer, \heapsepautomizer, \heapsepautomizeronly, \automizer timeouts}
    % \legend{\automizer, \heapsepautomizer, \heapsepautomizeronly}
%    \legend{Automizer\xspace, Automizer with Program Transformation\xspace,
%      Automizer in isolation after Program Transformation\xspace}
   \legend{\hspace*{0.2cm}\automizer without, \automizer with \hspace*{0.2cm}, \automizer after\hspace*{0.2cm}}
  \end{axis}
\end{tikzpicture}
  \caption{The \ultimate Automizer toolchain \emph{without} and \emph{with} 
     the program transformation as a preprocessing step,
    and the \ultimate Automizer toolchain in isolation applied \emph{after}
    the program transformation, on a 
    benchmark suite  whose  programs are scaled-up versions of the example
    program in Section~2.  The timeout {{\small{(TO)}}} is set to 1800 seconds.
  % %The x-axis represents the number of pointer variables
  % %in the corresponding example program. The y-axis represents time taken until
  % %each toolchain terminated.
  % The red crosses represent the plain \ultimate Automizer toolchain.   The
  % circles represent the toolchain where our transformation is run and then
  % Automizer.
  % The squares represent runs of Automizers verification algorithm on the already
  % transformed programs.  (TO) in the corresponding color means that the
  % toolchain timed out after 1800
  % seconds.
%   The crosses denote the timings of the \ultimate Automizer model checker on
%   the given program. 
%   The circles denote the timings of the toolchaing that runs the preprocessing,
%   including the computation of $\lastwritessharp$, and afterwards Automizer.
%   The squares denote the timings of Automizer on the already preprocessed
%   programs.}
  \label{chap_heapsep_fig_plot}
  }
\end{figure}
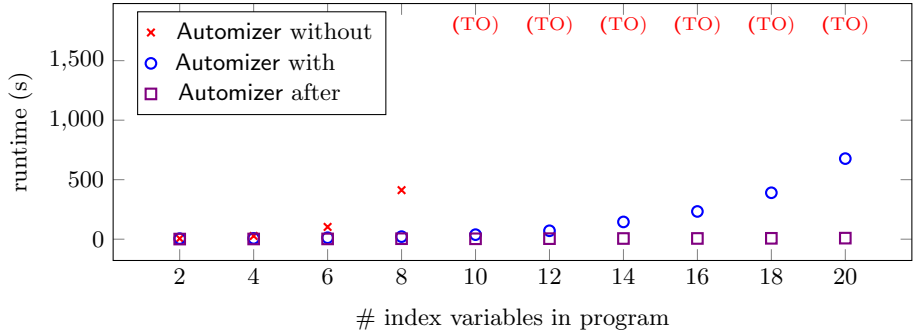

In Figure \ref{chap_heapsep_fig_plot} we display the results of our experimental
evaluation.
The x-axis of the plot represents the different example programs, identified by 
the number of map index variables. 
The y-axis represents the time taken by each toolchain.
We ran three toolchains:
The \ultimate Automizer program verifier, \ultimate Automizer  where
before the verification run, the transformation is applied, and a toolchain where
Automizer was run on the already transformed programs.

We observe that the timings of Automizer on the transformed programs are
nearly constant in the number of used map index variables -- the timings range
from 0.9 seconds to 8.8 seconds. This means that the only real difficulty in our
programs lies in deriving the non-interferences between the map accesses.
Furthermore, we can see that the Automizer fails to scale well when it needs to
derive the non-interferences itself: It fails to prove all examples with 10 or
more map index variables.
The toolchain that includes our transformation shows a significantly improved
scaling behaviour even though the transformation (in particular the static
analysis it is based on) is not cheap.

 \section{Related Work}
  There are several works resembling ours in that they propose computing
non-interference properties between memory regions to simplify the verification
conditions that are handed to an SMT solver.
Rakamaric and Hu~\cite{DBLP:conf/vmcai/RakamaricH09}, as well as
Wang et al.~\cite{DBLP:conf/vmcai/0062BW17} propose a memory model that
uses maps which are separated according to the results of an up-front alias
analysis.  
Gurfinkel and Navas~\cite{DBLP:conf/sas/GurfinkelN17} propose a related but
different memory model. In their setting, the heap state is passed between
procedures through local map variables. They propose a memory model with a
partitioning that is context-sensitive to improve precision.
In contrast to our work, these papers all rely on C semantics for their input
program, so they do not apply to arbitrary map manipulating programs.
%Our approach can emulate their partitionings in priciple insofar all
%relevant background assumptions are made explicit in the programming language, 
%e.g., the absence of undefined behaviour through pointer dereference would have
%to be specified through assume statements.

%While memory is stored in global maps in our example, there is nothing in our
%approach that hinders passing around maps in procedure calls. (Note that we
%fully support assignment of map variables.) Therefore our approach is suited for
%being context-sensitive if the underlying static analysis is.
% like: we allow for context-sensitivity .. edit: no we don't (?)
%All the analyses described in this paragraph are closely related in spirit to
%ours but are tailored to a more specific use case in that C semantics is assumed
%in the background. Thus those approaches are not suitable for our goal of
%creating a program transformation using only the semantics of the program fed to
%it.

%\TOOD{Cohen?}

Our relation \lastwrites and the corresponding property is reminiscent of a
large field of work that is concerned with inferring guarantees about data
dependencies between program parts in the presence of arrays. 
We can only mention a few papers here, e.g.,
\cite{DBLP:journals/ijpp/Feautrier91,DBLP:conf/lcpc/PughW93,DBLP:journals/toplas/PaekHP02}.
These papers propose various approaches of finding data dependencies in programs
with arrays in different precisions, for different fragments and for different
applications. None of them is aimed at symbolic program verification as our work
is.
%\cite{DBLP:journals/ijpp/Feautrier91},
%\cite{DBLP:conf/lcpc/PughW93},
%\cite{DBLP:journals/toplas/PaekHP02}.
To our knowledge, our property is the only one that accounts for maps, the
crucial difference being the presence of by-value assignments.

%regarding array access analysis (also: data flow dependence
%analysis for arrays, def-use chains
%for arrays/maps/...)).
%%It allows for value-based analysis in the sense of~\cite{DBLP:conf/lcpc/PughW93}.
%Prominent papers in the field are 
%\cite{DBLP:conf/lcpc/PughW93}
%
%\cite{DBLP:journals/ijpp/Feautrier91}
%
%All of these works propose to compute data dependencies in the presence of
%arrays. 

%``programs with so-called static control and affine indices''
%
%More recently: \cite{DBLP:journals/toplas/PaekHP02} improved with regards to
%finding simple patterns in complex subscript (index) expressions.
%In contrast to our approach, these approaches are aimed at compiler
%optimization, as compilers much stronger constraints run runtime behaviour
%compared to verification, these approaches 

%Array privatization ...

%ud-chains??

%Wikipedia: ``array access analysis'', ergebnisse dazu hauptsächlich:
%Feautrier 1991, Dataflow Analysis of Array and Scalar References 
%Paek (...Padua), 2002, Efficient and Precise Array Access Analysis

 \section{Discussion}
  \label{chap_heapsep_sec_discussion}
  We discuss some of the choices we made in this paper.

\subsection{Alias Analysis vs. Intermediate Verification Languages}

\begin{figure} %[False Positives in May-Alias Analysis]
  %Consider the following program.
  \begin{lstlisting}
// (memory model infrastructure)
procedure main() {
  var p, q : int;
  p = 0;
  q = 0;
  // (code not using mem[p] or mem[q])
  p = malloc();
  q = malloc();
  // (code using mem[p] and mem[q])
}
  \end{lstlisting}
  \caption{Program that illustrates why it is not sufficient to
  only consider pointer (map index) variables in our setting.
  Without any additional assumptions we must conclude that \stm{p} and \stm{q}
  may alias and thus that there is a dependency between statements that use 
  \stm{p} and \stm{q} to access the map \stm{mem}. 
  However, if we consider at which program locations \stm{p} and \stm{q} are
  actually used to access the map \stm{mem}, we can conclude that those accesses
  must be mutually independent. (An ensures statement guarantees that the
  procedure \stm{malloc} never returns the same value twice.)
    \label{chap_heapsep_fig_example_ma}
  }
%  The program first assigns 0 to the pointers \stm{p} and \stm{q} (0 may
%  represent the \stm{null} pointer), then at some later points some allocated
%  memory is assigned to them. Memory is only accessed at \stm{p} and \stm{q}
%  after they have been allocated.
%
%  \stm{p} and \stm{q} may alias here: Both of them are assigned to 0.  However,
%  they cannot interfere, because, before they are used, they are
%  assigned distinct values by the calls to \stm{malloc}.
\end{figure}

In this subsection, we discuss why classical alias
analyses cannot be used as a basis for our program transformation.

A classical alias analysis reasons about the pointer variables of a program. In
a nutshell, the analysis collects all the assignments in the program that assign
a pointer source value to a pointer variable.
Possible source values are typically: (1) calls to memory-allocating procedures,
like \stm{malloc}, (2) expressions that point to memory that is known to be
implicitly allocated, like the addressof-expression \stm{\&x}, (3) other
pointers.
While the classes of source values may vary, %from language to language, 
it is
always assumed that no two pointers alias ``by accident''. I.e., when a pointer
is uninitialized, it is assumed to be distinct from every other pointer, even
though nothing is known about its value at the time. The same holds for pointers
that have been freed.
Similarly, every pointer that has the value \stm{null} is assumed to not alias
with any other pointer, even if that other pointer also has the value
\stm{null}.
To summarize, only \emph{valid} pointer values are taken into account for alias
analysis. This is sound in the context of the programming language because
accessing an invalid pointer would lead to undefined behaviour according to the
language standard. Thus, the analysis reasons about pointers with the hidden
assumption that no undefined behaviour occurs in the program because in the case
of undefined behaviour all guarantees about what the program does are lost
anyways.
%It may still be desirable to check the absence of undefined behaviour in the
%program but this is treated as an orthogonal issue.

These assumptions enable extremely efficient pointer analyses because in this
setting the only way that two pointers can alias is if there is a chain of
assignments between pointer variables that (transitively) assigns the value of
one pointer variable to the other.
Therefore, a flow-insensitive analysis that collects all assignments of pointer
variables without regard to control flow can already achieve good precision
while being highly scalable.

The analogue to pointers in an intermediate verification language are map
indices, i.e., values that are used to read values from a map variable. It is
common to use mathematical integers as the sort of map indices, like in our
example.
In our setting, assumptions that are not explicitly modelled in the program are
not allowed.
Therefore, we have two options: (1) We model all assumptions in our verification
language. E.g. we would have to check that all pointer accesses are indeed
valid. This is impractical as checking this is a hard verification task on its
own right. (2) We develop an alternative to alias analysis that does not rely on
these assumptions -- which is what we did in this paper.

As the example program in Figure \ref{chap_heapsep_fig_example_ma} illustrates,
it is not enough if our static analysis only considers the values that map
indices may assume.
Instead, we must track when and how (for read or write accesses) the indices are
actually used. This is done by the Last Writes relation \lastwrites.

\subsection{Assume Statements over Map Variables}
 From a theoretical of view, it might be interesting why we omit assume
 statements that equate map variables from our programming language.
 We now explain the complications this would entail.
%In the following example, we motivate the exclusion of assume statements that
%relate expressions of map type from our programming language.
%\begin{example} %[Assume Statements over Maps]

  Consider the following program snippet.
  %The following program snippet illustrates how equating maps in assume
  %statements would complicate our analysis. 
  \begin{lstlisting}
x := b[i];
a[j]:=y;
assume a==b;
  \end{lstlisting}
  The snippet contains no loops or procedure calls but still the map write in
  the second line influences the map read that comes \emph{earlier} in the code
  because the assume statement establishes a relationship between the maps
  \stm{a} and \stm{b}.
  %I.e., in terms of the definition of \lastwrites in LTL, there is a dependency.
  Thus, because \stm{i} and \stm{j} may alias, we have
  $(\stm{a[j]:=y}, \stm{x:=b[i]}) \in \lastwrites$ (note that the assume
  statement enforces the timestamps to match as well as the values, between
  \stm{a} and \stm{b}).  
  This would mean that a practical computation of \lastwrites would have to
  incorporate both forward- and backward analysis, whereas without such assume
  statements it is sufficient to propagate information in just one
  direction.  

 \section{Conclusion}
  We have investigated the theoretical foundations for a novel research
question which may be relevant for the practical potential of
intermediate verification languages.  The question concerns a
preprocessing step for intermediate verification languages which takes
the similar role that alias analysis plays in the verification for
programming languages.  We have presented a preliminary solution in
the form of a program transformation.  We have integrated the program
transformation into a toolchain.   A preliminary
  experimentation shows that the program transformation
  can be effective, at least in principle.   On a benchmark
suite which is specifically tailored to condensate the case split
explosion problem, the toolchain with the program transformation
scales very well in the size of the program (whereas the toolchain
without the program transformation quickly falls into the case
explosion problem and runs out of time or space).  

The thorough experimentation needed to establish whether the approach can be
made applicable to classes of practical benchmarks (or, to what
classes) is not in the scope of this paper.  We see our investigation as
a preliminary for a wealth of future investigations to explore the practical
potential of intermediate verification languages.

 \bibliographystyle{abbrv}
 \bibliography{main}

\end{document}